\newcommand{\vertiii}[1]{{\left\vert\kern-0.25ex\left\vert\kern-0.25ex\left\vert #1 
		\right\vert\kern-0.25ex\right\vert\kern-0.25ex\right\vert}}
  \DeclareMathOperator{\var}{Var}
  \DeclareMathOperator{\cb}{cb}
  \DeclareMathOperator{\fcb}{fcb}
  \newcommand{\norm}[1]{\|#1\|}
  \newcommand{\Id}{\ensuremath{\mathop{\rm Id}\nolimits}}
  \newcommand{\eps}{\varepsilon}
  \newcommand{\poly}{\mbox{\rm poly}}
  \newcommand{\ind}[1]{\mathbf{#1}}
  \newcommand{\beq}{\begin{equation}}
  \newcommand{\eeq}{\end{equation}}
  \newcommand{\beqn}{\begin{equation*}}
  \newcommand{\eeqn}{\end{equation*}}
  \newcommand{\beqr}{\begin{eqnarray}}
  \newcommand{\eeqr}{\end{eqnarray}}
  \newcommand{\beqrn}{\begin{eqnarray*}}
  \newcommand{\eeqrn}{\end{eqnarray*}}
  \newcommand{\bmline}{\begin{multline}}
  \newcommand{\emline}{\end{multline}}
  \newcommand{\bmlinen}{\begin{multline*}}
  \newcommand{\emlinen}{\end{multline*}}
  \theoremstyle{plain}
  \newtheorem{theorem}{Theorem}[section]
  \newtheorem{proposition}[theorem]{Proposition}
  \newtheorem{question}[theorem]{Question}
  \theoremstyle{definition}
  \newtheorem{definition}[theorem]{Definition}
  \newtheorem{conjecture}[theorem]{Conjecture}
  \theoremstyle{remark}
  \newtheorem{remark}[theorem]{Remark}
  \renewenvironment{proof}[1][]{
    	\begin{trivlist}
     	\item[\hspace{\labelsep}{\em\noindent Proof#1:\/}]}
     	{{\hfill$\Box$}
    	\end{trivlist}
  }
\newif\ifnotes\notesfalse
\definecolor{mygrey}{gray}{0.50}
\newcommand{\notename}[2]{{\textcolor{OliveGreen}{\footnotesize{\bf (#1:} {#2}{\bf ) }}}}
\newcommand{\noteswarning}{{\begin{center} {\Large WARNING: NOTES ON}\end{center}}}
\newcommand{\notename}[2]{{}}
\newcommand{\noteswarning}{{}}
\begin{document}

\title[Influences of Fourier Completely Bounded Polynomials
]{
Influences of Fourier Completely Bounded Polynomials and Classical Simulation of Quantum Algorithms
}

\author[F. Escudero Guti\'errez]{Francisco Escudero Guti\'errez}
\address{CWI \& QuSoft, Science Park 123, 1098 XG Amsterdam, The Netherlands}
\email{feg@cwi.nl}

\thanks{\includegraphics[height=4ex]{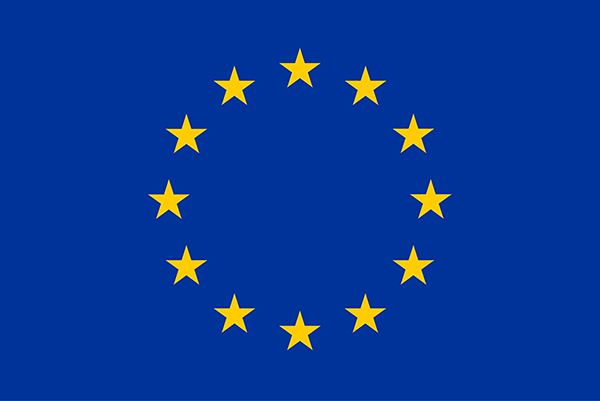} This research was supported by the European Union’s Horizon 2020 research and innovation programme under the Marie Sk{\l}odowska-Curie grant agreement no. 945045, and by the NWO Gravitation project NETWORKS under grant no. 024.002.003.}

\begin{abstract}

	We give a new presentation of the main result of Arunachalam, Bri\"et and Palazuelos (SICOMP'19) and show that quantum query algorithms are characterized by a new class of polynomials which we call Fourier completely bounded polynomials.
	We conjecture that all such polynomials have an influential variable.
	This conjecture is weaker than the famous Aaronson--Ambainis (AA) conjecture (Theory of Computing'14), 	but has the same implications for classical simulation of quantum query algorithms. 
	
	We prove a new case of the AA conjecture by showing that it holds for homogeneous Fourier completely bounded polynomials.
	This implies that if the output of $d$-query quantum algorithm is a homogeneous polynomial~$p$ of degree $2d$, then it has a variable with influence at least $\var[p]^2$.
	
	In addition, we give an alternative proof of the results of Bansal, Sinha and de Wolf (CCC'22 and QIP'23) showing that block-multilinear completely bounded polynomials have influential variables. Our proof is simpler, obtains better constants and does not use randomness. 
\end{abstract}

\maketitle

\noteswarning

\section{Introduction}\label{sec:intro}
Understanding the quantum query complexity of Boolean functions $f:D\to\{-1,1\}$, where $D$ is a subset of $\{-1,1\}^n$, has been a crucial task of quantum information science \cite{ambainis2018understanding}. Query complexity is a model of computing where rigorous upper and lower bounds can be proven, which allows one to compare the power of quantum computers with classical ones. Many celebrated quantum algorithms show an advantage in terms of query complexity, for example in unstructured search \cite{grover1996fast}, period finding \cite{shor1999polynomial}, Simon’s problem \cite{Simon}, NAND-tree evaluation \cite{farhi2007quantum} and element distinctness \cite{ambainis2007quantum}. However, these advantages are limited to be polynomial in the case of total functions (those with $D=\{-1,1\}^n$), while they can be exponential for highly structured problems (informally, this means that $|D| = o(2^n)$), such as for Simon's problem \cite{Simon}, period finding \cite{shor1999polynomial} or $k$-fold forrelation \cite{aaronson2015forrelation,tal2020towards,bansal2021k,sherstov2021optimal}. It is widely believed that a lot of structure is necessary for superpolynomial speedups\footnote{Recently, Yamakawa and Zhandry showed that superpolynomial speedups can be attained in unstructured search problems. However, that does not contradict the believe that structure is needed to achieve superpolynomial speedups in decision problems, which are those modeled by Boolean functions \cite{yamakawa2022verifiable}.}. The following folklore conjecture, which has circulated since the late 90s, but was first formally posed by Aaronson and Ambainis~\cite{aaronson2009need}, formalizes this idea. 

\begin{conjecture}[Folklore]\label{con:needforstructure}
	The output of $d$-query quantum algorithms can be simulated with error at most $\eps$ on at least a ($1-\delta$)-fraction of the inputs using poly($d,1/\eps,1/\delta$) classical queries.
\end{conjecture}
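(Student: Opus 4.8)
This conjecture remains open, so rather than a complete argument I outline the route this paper (and the Aaronson--Ambainis programme behind it) takes towards it, and where the real difficulty lies.

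The starting point, which the paper establishes by recasting the theorem of Arunachalam, Bri\"et and Palazuelos, is that the acceptance probability of a $d$-query quantum algorithm is, as a function on $\pmset{n}$, a polynomial $p$ of degree at most $2d$ that takes values in $[0,1]$ and that is moreover \emph{Fourier completely bounded} with norm at most $1$. Since having Fourier completely bounded norm at most $1$ implies being bounded by $1$, but not conversely, such $p$ form a strict subclass of all bounded low-degree polynomials; consequently the statement about influential variables one needs here is formally weaker than the Aaronson--Ambainis conjecture, yet still implies Conjecture~\ref{con:needforstructure}. The plan is thus twofold: (i) reduce Conjecture~\ref{con:needforstructure} to the assertion that every degree-$k$ polynomial $q$ which is Fourier completely bounded with norm at most $1$ and has $\var[q]\ge v$ possesses a variable $i$ with $\mathrm{Inf}_i[q]\ge\poly(v/k)$; and (ii) prove that assertion.

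For step (i) I would invoke the reduction of Aaronson and Ambainis (see \cite{aaronson2009need}): the influence bound in (i) implies that any bounded degree-$k$ polynomial is $\eps$-approximated off at most a $\delta$-fraction of inputs by a classical decision tree of depth $\poly(k,1/\eps,1/\delta)$, built by adaptively querying a variable of large influence in the current restriction $p_\rho$ and halting once $\var[p_\rho]$ falls below a suitable threshold $\tau=\poly(\eps,\delta)$, the depth being controlled by a potential argument showing that the expected residual variance decreases steadily as long as queries keep being made. Two remarks make this usable with the weaker, Fourier completely bounded hypothesis: the output of a $d$-query quantum algorithm has degree at most $2d$, and restricting variables does not increase the Fourier completely bounded norm, so every $p_\rho$ encountered still lies in the class to which the influence bound applies. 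Hence (i) reduces Conjecture~\ref{con:needforstructure} to step (ii), and I would expect no new obstacle in (i), so essentially all the effort goes into step (ii).

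Step (ii) is the genuinely hard part, and in full generality it is open. What the paper actually proves is the influence bound for \emph{homogeneous} Fourier completely bounded polynomials---indeed with the clean, degree-free estimate $\mathrm{Inf}_i[p]\ge\var[p]^2$, which already yields Conjecture~\ref{con:needforstructure} for $d$-query algorithms whose output is a homogeneous degree-$2d$ polynomial---and it also reproves, with better constants and without randomness, the block-multilinear completely bounded case of Bansal, Sinha and de Wolf. The obstruction to resolving Conjecture~\ref{con:needforstructure} in full is entirely in extending (ii) to arbitrary, non-homogeneous Fourier completely bounded polynomials: one cannot simply run the homogeneous result level by level, because the homogeneous components of a bounded polynomial need not themselves be bounded, and this is precisely where I would expect the main difficulty to concentrate.
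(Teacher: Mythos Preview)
Your proposal is correct: the statement is an open conjecture, and you accurately summarise the paper's route towards it---reducing Conjecture~\ref{con:needforstructure} to the Fourier completely bounded variant of the AA conjecture (Conjecture~\ref{con:AAconjectureFCB}) via the Aaronson--Ambainis restriction argument, using the fact that restrictions do not increase the $\fcb$-norm (Proposition~\ref{prop:noincreaseafterrest}), and observing that only the homogeneous case of step~(ii) is established (Theorem~\ref{theo:AAFCBconjectureForMaxDegreePol}).

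One minor divergence worth flagging: your diagnosis of the obstruction---that homogeneous components of a bounded polynomial need not be bounded---is a reasonable worry in general, but it is not the one the paper identifies. In fact, for block-multilinear completely bounded polynomials the paper \emph{does} pass to a homogeneous part (inequality~\eqref{eq:BCB1}), so that obstruction is not absolute. The paper's own explanation (Remark~\ref{rem:whynot}) is more specific: the Varopoulos-type matrix construction used in the homogeneous proof fails in the inhomogeneous setting because, when $\deg p < d$, the first application of $A(i)$ to $v$ must already carry amplitudes for \emph{all} Fourier coefficients of $p$, forcing the normalising factor to be $\sqrt{\var[p]}$ rather than $\sqrt{\mathrm{MaxInf}[p]}$, which yields only the trivial bound. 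This is why the paper poses Question~\ref{que:generalization} rather than attempting a level-by-level reduction.
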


In other words, it is believed that quantum query algorithms can be approximated almost everywhere by classical query algorithms with only a polynomial overhead. 

A route towards proving Conjecture \ref{con:needforstructure} was designed by Aaronson and Ambainis using that the output of quantum query algorithms are polynomials\footnote{In this work we identify the output of quantum query algorithms on an input $x\in \{-1,1\}^n$ with the expectation of the quantum query algorithm on $x$, namely the difference of probability of accepting and the probability of rejecting.}. 
Indeed, Beals et al.\ \cite{polynomialmethod}, proved that the output of a $d$-query quantum algorithm is a bounded polynomial $p:\{-1,1\}^n\to\mathbb{R}$ of degree at most $2d$. Here, \emph{bounded} means that its supremum norm
\begin{equation}\label{eq:infnorm}
	\norm{p}_\infty=\sup_{x\in\{-1,1\}^n}|p(x)|
\end{equation}
is at most 1. Based on this observation, Aaronson and Ambainis conjectured in \cite{aaronson2009need} that every bounded polynomial of bounded degree has an influential variable.

\begin{conjecture}[Aaronson-Ambainis (AA)]\label{con:AAconjecture}
	Let $p:\{-1,1\}^n\to \mathbb{R}$ be a polynomial of degree at most $d$ with $\norm{p}_\infty\leq 1$. Then, $p$ has a variable with influence at least poly($\var[p],1/d$).
\end{conjecture}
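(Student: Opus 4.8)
Conjecture~\ref{con:AAconjecture} is the Aaronson--Ambainis conjecture, which is open in general; the outline below is the route I would take, together with the point at which it currently stalls. The plan is to argue by contraposition in quantitative form: after normalising $\E[p]=0$, I would assume that every variable has influence $\mathrm{Inf}_i[p]<\tau$ and try to deduce $\var[p]\le \poly(d)\,\tau^{\Omega(1)}$; inverting this bound yields a variable of influence $\poly(\var[p],1/d)$. Decomposing $p=\sum_{k=1}^{d}p_k$ into homogeneous parts, one has $\var[p]=\sum_{k=1}^{d}W^{k}[p]$ with $W^{k}[p]=\sum_{|S|=k}\widehat p(S)^2$, so it suffices to bound each level weight $W^{k}[p]$ by $\poly(d)\,\tau^{\Omega(1)}$.

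The central device is a random restriction. First I would keep each coordinate alive independently with probability $\rho\approx 1/d$ and fix the rest to independent uniform signs, producing a random polynomial $p_J$ with $\deg p_J\le d$ and $\norm{p_J}_\infty\le 1$ for every outcome; a short Fourier computation gives $\E_J\var[p_J]=\sum_{S\ne\emptyset}\bigl(1-(1-\rho)^{|S|}\bigr)\widehat p(S)^2\ge \rho\,\var[p]$. The other half of the argument should say that, because all influences of $p$ are below $\tau$, with high probability over $J$ the restriction $p_J$ lies within $L_2$-distance $\poly(d)\sqrt{\tau}$ of a constant; this is where a hypercontractive / fourth-moment estimate has to enter, and it is the delicate step. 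Comparing the two estimates would force $\rho\,\var[p]\le \poly(d)\,\tau$, hence $\var[p]\le \poly(d)\,\tau$.

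Equivalently, I would try to establish a level-$k$-type inequality: for every bounded degree-$d$ polynomial and every $k$, $W^{k}[p]\le \poly(d)\bigl(\max_i \mathrm{Inf}_i[p]\bigr)^{c_k}$ with $c_k>0$. As this paper shows, the top level $k=d$ of this statement does hold for \emph{homogeneous Fourier completely bounded} polynomials, via an operator-valued hypercontractive / Bohnenblust--Hille-type estimate in which the completely bounded norm -- not a dimension-growing constant -- controls the Fourier coefficients; so one natural attempt is to bootstrap, peeling off the top homogeneous part $p_d$, handling it by that argument, and inducting on the degree after passing to a bounded polynomial built from a discrete derivative $\partial_i p$. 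The catch is that there is no clean way to homogenise an arbitrary multilinear $p$ on $\{-1,1\}^n$, so the reduction to the completely bounded setting does not obviously close.

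The hard part -- and the reason the conjecture is still open -- is precisely the ``$p_J$ becomes nearly constant'' step, equivalently the level-$k$ inequality, for \emph{arbitrary} bounded polynomials: a bounded degree-$d$ polynomial can spread its variance across all $n$ coordinates with each influence of order $1/n$, and no known structural estimate rules out the restriction $p_J$ staying highly variable in that regime. It is exactly the Fourier-completely-bounded hypothesis that supplies such an estimate, which is why the homogeneous case goes through unconditionally in this paper; a genuinely new idea seems to be needed either to prove a hypercontractive inequality valid for all bounded low-degree polynomials, or to reduce the general bounded case to the (Fourier) completely bounded one.
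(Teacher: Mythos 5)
This statement is Conjecture~\ref{con:AAconjecture}, not a theorem: the paper does not prove it, and it is open. You correctly recognise this, so there is no proof in the paper to compare against; what you supply is a strategy sketch that you yourself flag as incomplete, and I will assess it as such. Your restriction calculation is correct: keeping each coordinate alive independently with probability~$\rho$ and fixing the rest to uniform signs gives $\E_J\var[p_J]=\sum_{S\neq\emptyset}\bigl(1-(1-\rho)^{|S|}\bigr)\widehat p(S)^2\geq\rho\var[p]$, and you accurately locate the gap at the ``$p_J$ becomes nearly constant'' step, which no known structural estimate supplies for arbitrary bounded low-degree polynomials. That diagnosis matches the consensus on why the conjecture is hard.

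One substantive correction concerns how you describe the paper's partial result, Theorem~\ref{theo:AAFCBconjectureForMaxDegreePol}. You say the homogeneous Fourier completely bounded case goes through ``via an operator-valued hypercontractive / Bohnenblust--Hille-type estimate''. That is not the mechanism. The paper's argument is a direct Varopoulos-style witness construction: one builds explicit finite-dimensional contractions $A(i)$ of creation/annihilation type whose amplitudes are the Fourier coefficients divided by $\sqrt{\mathrm{MaxInf}[p]}$, together with unit vectors so that $(u,v,A)$ has Boolean behaviour of degree $d$ and $\langle u, A(i_1)\cdots A(i_d)v\rangle=\widehat p(S_{\mathbf i})/\sqrt{\mathrm{MaxInf}[p]}$; feeding this witness into the definition of $\norm{\cdot}_{\fcb,d}$ gives $\norm{p}_{\fcb,d}\geq\var[p]/\sqrt{\mathrm{MaxInf}[p]}$ and hence $\mathrm{MaxInf}[p]\geq\var[p]^2$. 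There is no hypercontractivity, no randomness, and no Bohnenblust--Hille constant in that proof. This matters for your proposed bootstrap: the obstruction to the non-homogeneous case is not a missing hypercontractive inequality but the normalisation of the witness, as Remark~\ref{rem:whynot} explains (for a general degree-$d$ polynomial the analogous construction forces a denominator of $\sqrt{\var[p]}$ rather than $\sqrt{\mathrm{MaxInf}[p]}$), and the paper packages the missing ingredient as Question~\ref{que:generalization} --- the existence, for non-homogeneous $p$, of a Boolean-behaviour witness with amplitudes $\widehat p(S_{\mathbf i})/\poly(d,\mathrm{MaxInf}[p])$ --- which is a tensor-network/almost-quantum-correlation problem, not a level-$k$ Fourier-weight inequality. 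If you want to build on the paper's technique, that is the question you would need to answer.
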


The argument of \cite[Theorem 7]{aaronson2009need} to show that \cref{con:AAconjecture} would imply \cref{con:needforstructure} works as follows. Let $p$ the bounded polynomial of degree at most $2d$ that represents the output of $d$ query quantum algorithm. Say that we want to approximate $p(y)$ for some $y\in \{-1,1\}^n$. First, query an influential variable $i$ of $y$. Then, the restricted polynomial $p|_{x(i)=y(i)}$ would also be a bounded polynomial of degree at most $2d$, so we can query again an influential variable. Given that the influences of these variables are big, after a \emph{small} number of queries the remaining polynomial would have a low variance, so if we output its expectation it would be close to $p(y)$ with high probability.

Despite capturing the attention of a wide range of areas, little is known about Conjecture \ref{con:AAconjecture}. In fact, it was motivated by a similar result from analysis of Boolean functions, which proves the conjecture with inverse exponential dependence on the degree \cite{dinur2006fourier}. This result was reproved from a functional analytic perspective, highlighting its relation with the Bohnenblust-Hille inequality \cite{defant2019fourier}. 
A couple reductions to other conjectures have been made, such as that it is sufficient to prove the conjecture for one-block decoupled polynomials \cite{o2015polynomial}. Very recently, Lovett and Zhang  stated two conjectures related to fractional certificate complexity that, if true, would imply the AA conjecture \cite{lovett2022}. Also recently, Austrin et al.\ showed a connection of the AA conjecture with cryptography: if the AA conjecture is false,  then there is a secure key agreement in the quantum random oracle model that cannot be broken classically \cite{austrin2022impossibility}. However, we only know it to be true in a few particular cases: Boolean functions $f:\{-1,1\}^n\to\{-1,1\}$ \cite{midrijanis2005randomized,o2005every,Jain11}, symmetric polynomials \cite{Ivanishvili2019},  multilinear forms whose Fourier coefficients are all equal in absolute value \cite{montanaro2012some} and block-multilinear completely bounded polynomials \cite{bansal2022influence}. 

The last result is relevant in this context because  Arunachalam, Bri\"et and Palazuelos showed that quantum query algorithms output polynomials that are not only bounded, but also completely bounded \cite{QQA=CBF}. This is a more restricted normalization condition, which can be informally understood as the polynomial taking bounded values when evaluated not only on bounded scalars, but also on bounded matrix inputs. This way, one could try to use this extra condition to prove results about quantum query algorithms. 
The main motivation for Conjecture~\ref{con:AAconjecture} was the fact that it was known to hold with exponential dependence on the degree~\cite{dinur2006fourier}.

This idea was first put in practice by Bansal, Sinha and de Wolf \cite{bansal2022influence}. They showed that the AA conjecture holds for completely bounded block-multilinear forms, which implies an almost everywhere classical simulation result, similar to \cref{con:needforstructure}, for the amplitudes of certain quantum query algorithms. These algorithms query different (non-controlled) bit strings on every query, while Conjecture \ref{con:needforstructure} concerns algorithms that query the same controlled bit string on every query.

\subsection{Our results}
We follow that line of work and use the characterization of \cite{QQA=CBF} to design a route towards \cref{con:needforstructure}. Our first result is a new presentation of that characterization that is more convenient for our purposes. To do this we introduce the Fourier completely bounded $d$-norms ($\norm{\cdot}_{\fcb,d}$), which are relaxations of the supremum norm. In these norms we not only take the supremum of the values that the polynomial takes over Boolean strings as in \cref{eq:infnorm}, but also on matrix inputs that behave like Boolean strings. We will not include formal definitions in the introduction, but we illustrate the concept of having \emph{Boolean behavior of degree $d$} with an example. For $m\in \mathbb{N}$, we denote the $m\times m$ real matrices by $M_m$. Say that $d=4$ and $n=6$, then if a pair of vectors $u,v\in\mathbb{R}^m$  and a string of matrices $A\in (M_m)^6$ have Boolean behaviour of degree $4$, they satisfy, for instance, $$\langle u,A(1)A(1)A(2)A(3)v\rangle=\langle u, A(5)A(2)A(3)A(5)v\rangle,$$
because they should simulate the relation $x(1)x(1)x(2)x(3)=x(5)x(2)x(3)x(5)$ satisfied by any Boolean string $\{-1,1\}^6$. As the reader might guess, $(u,v,A)$ will have Boolean behavior of degree $d$ if it simulates the relations of $\mathbb{F}_2^n$ that involve product of $d$ of the canonical generators. 

Using the Fourier expansion of polynomials defined on the Boolean hypercube we will introduce a natural way of evaluating polynomial on matrix inputs that have Boolean behavior, which allows us to introduce  the Fourier completely bounded $d$-norm.

\begin{definition}(Informal version of \cref{def:fcbnorm})
	Let $p:\{-1,1\}^n\to \mathbb{R}$ be a polynomial of degree at most $d$. Its \emph{Fourier completely bounded $d$-norm} is given by
	\begin{align}\label{eq:DefFCB}
		\norm{p}_{\fcb,d}:=\sup|p(u,v,A)|
	\end{align}
	where the supremum is taken over all $(u,v,A)$ that have Boolean behavior of degree $d$.
\end{definition} 

After a reinterpretation of the semidefinite programs proposed in \cite{gribling2019semidefinite} to characterize quantum query complexity, based on \cite{QQA=CBF}, we show that the Fourier completely bounded $d$-norms are those that characterize quantum query algorithms. 

\begin{restatable}{theorem}{QQAareFCBPolynomials}\label{theo:QQAareFCBPolynomials}
	Let $p:\{-1,1\}^n\to \mathbb{R}$. Then, $p$ is the output of a $d$-query quantum algorithm if and only if its degree is at most $2d$ and $\norm{p}_{\fcb,2d}\leq 1$. 
\end{restatable}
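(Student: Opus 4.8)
The plan is to prove the equivalence by relating both sides to the semidefinite programs of Gribling--Laurent \cite{gribling2019semidefinite}, using the bridge of Arunachalam--Bri\"et--Palazuelos \cite{QQA=CBF}. First I would recall the operator/matrix picture of a $d$-query quantum algorithm: the final amplitude on input $x\in\{-1,1\}^n$ has the form $\langle \psi_0, U_d O_x U_{d-1}\cdots O_x U_0 \psi_0\rangle$ where each $O_x$ is the (controlled) query unitary encoding the bits $x(1),\dots,x(n)$ and a workspace register, and the $U_j$ are fixed unitaries. Expanding the diagonal action of each $O_x$ in the characters of $\mathbb{F}_2^n$ shows that the output is a polynomial $p$ of degree at most $2d$, and, crucially, that for \emph{any} tuple $(u,v,A)$ with Boolean behavior of degree $2d$ one may substitute $A(i)$ for the scalar $x(i)$ in the same matrix-product form and obtain a contraction; this gives the ``only if'' direction, i.e. $\deg p\le 2d$ and $\norm{p}_{\fcb,2d}\le 1$. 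The key point to make precise here is that the Boolean-behavior relations of degree $2d$ are exactly the algebraic identities among the $\le 2d$-fold products of the generators that the query model exploits, so the matrix substitution is well-defined on the relevant monomials.

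For the ``if'' direction I would argue contrapositively through the SDP. Assume $\deg p\le 2d$ and $\norm{p}_{\fcb,2d}\le 1$. The strategy is: (i) reinterpret the Gribling--Laurent SDP whose optimum characterizes ``$p$ is the output of a $d$-query algorithm'' so that its feasible region is visibly described by Gram matrices of vectors obeying precisely the degree-$2d$ Boolean-behavior constraints; (ii) observe that the condition $\norm{p}_{\fcb,2d}\le 1$ says exactly that the associated SDP is feasible (the supremum in \cref{eq:DefFCB} being at most $1$ is the same as a certain psd/consistency system having a solution, by a standard Gram/vectorization argument); (iii) invoke the equivalence, already established in \cite{gribling2019semidefinite} via \cite{QQA=CBF}, between SDP feasibility and the existence of an honest $d$-query quantum algorithm with output $p$. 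Steps (i)--(iii) are mostly bookkeeping: matching the variables of the SDP to the data $(u,v,A)$, checking that the linear constraints of the SDP are exactly the Boolean-behavior relations of degree $2d$ plus the Fourier-interpolation identities that force the quadratic form in the $A(i)$'s to reproduce $\widehat p$ on each monomial, and confirming the objective matches $\sup|p(u,v,A)|$.

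I would organize the write-up as: (a) a lemma restating $\norm{p}_{\fcb,2d}\le 1$ as a semidefinite feasibility condition (the ``completely bounded $=$ operator-space-bounded via Gram matrices'' step); (b) a lemma identifying that feasibility condition with the Gribling--Laurent SDP for $d$-query algorithms; (c) deducing \cref{theo:QQAareFCBPolynomials} by combining (a), (b) and the forward matrix-substitution argument. Throughout, the degree bound $\deg p\le 2d$ is automatic on one side and imposed on the other, so it needs no separate argument beyond noting that monomials of degree $>2d$ cannot arise from $2d$ insertions of the $O_x$'s.

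The main obstacle I expect is step (i)/(ii): making the dictionary between the analytic object $\norm{p}_{\fcb,2d}$ and the SDP completely watertight. Concretely, one must show that an approximate optimizer $(u,v,A)$ of \cref{eq:DefFCB} can be converted, \emph{without loss}, into a genuinely feasible SDP point -- this requires that the Boolean-behavior relations be rich enough to pin down a consistent Gram matrix (so that one does not merely get a near-feasible point with a small slack), and conversely that any SDP solution yields a tuple with exact Boolean behavior of degree $2d$, not just up to error. Handling this cleanly likely uses that the relevant sets are defined by finitely many polynomial equalities and a compactness/closedness argument to pass from ``sup $\le 1$'' to ``attained by a feasible configuration'', together with the dimension-reduction implicit in working with Gram matrices rather than with the vectors themselves. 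Once that correspondence is nailed down, the rest follows from the cited characterization in \cite{QQA=CBF,gribling2019semidefinite}.
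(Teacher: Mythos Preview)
Your overall route---reduce to the Gribling--Laurent SDP and identify its constraints with Boolean behavior of degree $2d$---is the same as the paper's. But several pieces of your plan are off and would not compile into a proof as written.

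First, you repeatedly phrase the SDP side as a \emph{feasibility} condition (``$\norm{p}_{\fcb,2d}\le 1$ says exactly that the associated SDP is feasible''). That is not what \cref{theo:sdpcbnorm} says: the SDP is always feasible, and the characterization is that its \emph{optimal value} is $\le 0$. The paper's manipulation is purely about rewriting that value condition: drop the constraint $\norm{\phi}_1=1$ by homogeneity, set $w=1$, use Parseval to turn $\sum_x p(x)\phi(x)/2^n$ into $\sum_S \hat p(S)\hat\phi(S)$, and observe that the remaining constraint $\hat\phi(S_{\ind i})=\langle u,A(i_1)\cdots A(i_{2d})v\rangle$ is exactly Boolean behavior of degree $2d$. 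Both directions of the theorem then fall out simultaneously; there is no separate ``only if'' argument and no feasibility-vs-infeasibility dichotomy. Your worry about converting approximate optimizers into exact feasible points (compactness, slack, etc.) simply does not arise: one is comparing two suprema over the \emph{same} set once the identification is made.

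Second, your direct ``only if'' argument---substitute the $A(i)$ into the algorithm's matrix-product form and get a contraction---is not obviously correct and is not what the paper does. The oracle $O_x$ is a single unitary encoding all the $x(i)$ simultaneously; replacing scalars by non-commuting contractions $A(i)$ inside $U_d O_x\cdots O_x U_0$ is not a well-defined substitution without further work, and in any case is unnecessary once you go through the SDP.

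Third, you are missing a step the paper needs explicitly: the Gribling--Laurent SDP has a \emph{different} tuple of contractions $A_s$ at each of the $2d$ layers, whereas $\mathscr{BB}^{2d}$ uses a single tuple $A$. The reduction (\cref{rem:sdpcbnorm}) is a short tensor trick, $\tilde A(i)=\sum_s A_s(i)\otimes e_s e_{s+1}^T$, but without it the identification of the SDP constraints with Boolean behavior does not go through.
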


This new presentation of the main result of \cite{QQA=CBF} is more compact than the original one. It is presented directly in terms of polynomials of the Boolean hypercube, does not involve a minimization over possible completely bounded extensions of $p$, and eludes the use of tensors. 

Given that the Fourier completely bounded $d$-norms are at least the supremum norm\footnote{From the results of \cite{briet2018failure} it can be inferred that there is a sequence of polynomials $p_n$ of degree 3 such that $\norm{p_n}_{\fcb,3}/\norm{p_n}_{\infty}\to_{n} \infty.$}, Theorem \ref{theo:QQAareFCBPolynomials} suggests that Conjecture \ref{con:AAconjecture} may be more general than necessary. Hence, we propose the following weaker conjecture, that would also imply Conjecture \ref{con:needforstructure}.

\begin{conjecture}\label{con:AAconjectureFCB}
	Let $p:\{-1,1\}^n\to \mathbb{R}$ be a polynomial of degree at most $d$ with $\norm{p}_{\fcb,d}\leq 1$. Then, $p$ has a variable with influence at least poly($\var[p],1/d$).
\end{conjecture}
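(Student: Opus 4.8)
Since the statement is a conjecture I will describe the strategy I would pursue and the partial result it seems to reach, instead of a complete proof. The natural first target is the homogeneous case. Take $p$ homogeneous of degree $2d$ with $\norm{p}_{\fcb,2d}\le 1$; by \cref{theo:QQAareFCBPolynomials} it is the output of a $d$-query algorithm, and feeding the semidefinite programs of \cite{gribling2019semidefinite} (reinterpreted as in \cite{QQA=CBF}) back into the Fourier picture yields a factorisation of the coefficients: there is a Hermitian matrix $G$, indexed by the depth-$d$ query histories and built from the amplitudes of the algorithm, with $\hat p(S)=\sum_{z\sqcup z'=S}G_{z,z'}$ for every $|S|=2d$, and with $G=D^{1/2}ND^{1/2}$ for a diagonal positive semidefinite $D$ satisfying $\Tr D\le 1$ and a contraction $N$, $\norm{N}_{\mathrm{op}}\le 1$. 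The completely bounded block-multilinear forms of \cite{bansal2022influence} are the special case in which the partition $S=z\sqcup z'$ is pinned down in advance by a fixed block structure.

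Given this representation the plan is to locate the influential variable by pure linear algebra, using two ingredients. First, for any nonempty $S$ and any $i\in S$ one has $\hat p(S)^2\le\mathrm{Inf}_i[p]$, so it suffices to exhibit a single Fourier coefficient of size at least $\var[p]$. Second — and this is where it is essential that $\norm{p}_{\fcb,2d}$, and not merely $\norm{p}_\infty$, is bounded, since even the homogeneous case of \cref{con:AAconjecture} under only $\norm{p}_\infty\le 1$ is open — the budget constraints $\Tr D\le 1$ and $\norm{N}_{\mathrm{op}}\le 1$ force the Fourier mass of $p$ to stay localised. Concretely, I would write $\var[p]=\sum_S\hat p(S)^2$ as a quadratic expression in $G=D^{1/2}ND^{1/2}$ and bound it, via Cauchy--Schwarz together with these two constraints, in terms of $\max_i\mathrm{Inf}_i[p]$; a self-improvement step using $\var[p]\le\norm{p}_\infty^2\le 1$ to absorb the combinatorial factors coming from the many partitions $S=z\sqcup z'$ then removes the dependence on $d$ and leaves a bound of the shape $\max_i\mathrm{Inf}_i[p]\ge\var[p]^2$. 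Running this argument directly on the semidefinite description, rather than through the random restriction of \cite{bansal2022influence}, is what removes the randomness and improves the constant, and the homogeneity is what keeps the telescoping clean.

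I expect the genuine obstacle to be removing homogeneity, which is essentially the original \cref{con:AAconjecture}. For an inhomogeneous degree-$d$ polynomial the factorisation involves the Fourier levels $|S|=0,1,\dots,d$ simultaneously, each assembled from a different range of query depths; the budget constraints then diffuse across all depths instead of localising, the Cauchy--Schwarz step no longer telescopes, and the low-degree and high-degree contributions to a fixed $\mathrm{Inf}_i[p]$ become incomparable, so nothing prevents $\var[p]$ from being large while every influence is tiny. A plausible way around this is to carry the reduction of \cite{o2015polynomial} to one-block-decoupled polynomials into the Fourier completely bounded setting — where the extra rigidity of $\norm{\cdot}_{\fcb,d}$, which is strictly stronger than $\norm{\cdot}_\infty$ already in degree $3$ by \cite{briet2018failure}, might compensate for the loss of homogeneity — and then apply the homogeneous argument block by block.
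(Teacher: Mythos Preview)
The statement is a conjecture; the paper does not prove it in general but establishes the homogeneous case (\cref{theo:AAFCBconjectureForMaxDegreePol}) and explains in \cref{rem:whynot} and \cref{que:generalization} why the same idea stalls for inhomogeneous $p$. You correctly aim at the homogeneous case and correctly anticipate the target inequality $\mathrm{MaxInf}[p]\ge \var[p]^2$, but your route is genuinely different from the paper's, and it has a real gap.

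The paper works on the \emph{primal} side: it exhibits an explicit $(u,v,A)\in\mathscr{BB}^d$, built from creation/annihilation-type operators \`a la Varopoulos, such that $\langle u, A(i_1)\cdots A(i_d) v\rangle=\hat p(S_{\ind i})/\sqrt{\mathrm{MaxInf}[p]}$ for every $\ind i$. This immediately gives $\norm{p}_{\fcb,d}\ge \var[p]/\sqrt{\mathrm{MaxInf}[p]}$, hence the bound. No SDP duality, no factorisation of $\hat p(S)$, no Cauchy--Schwarz over query histories; the matrices are written down in a few lines and the contraction check is elementary. Your approach is \emph{dual}: you want to read off a factorisation $\hat p(S)=\sum_{z\sqcup z'=S}G_{z,z'}$ with $G=D^{1/2}ND^{1/2}$, $\Tr D\le 1$, $\norm{N}_{\mathrm{op}}\le 1$, and then squeeze out the influence bound by linear algebra.

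The gap is that the crucial step is only gestured at. First, the remark that ``it suffices to exhibit a single Fourier coefficient of size at least $\var[p]$'' is correct as a sufficient condition but is generally too strong and is not what you end up using. Second, in the actual argument you write $\var[p]=\sum_S\hat p(S)^2$ as a quadratic in $G$ and invoke Cauchy--Schwarz; but each $|S|=2d$ admits on the order of $\binom{2d}{d}$ (or more, if $z,z'$ are ordered histories) decompositions $z\sqcup z'=S$, and Cauchy--Schwarz will pick up exactly this factor. The claimed ``self-improvement step using $\var[p]\le\norm{p}_\infty^2\le 1$'' is not spelled out, and I do not see how it removes an exponential-in-$d$ combinatorial loss rather than merely a polynomial one. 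Without that step carried out, the argument does not reach $\mathrm{MaxInf}[p]\ge\var[p]^2$ even in the homogeneous case. By contrast, the paper's explicit witness sidesteps all combinatorics of partitions entirely.

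On the inhomogeneous case you and the paper agree that this is the real obstacle, but the diagnoses differ. You blame the diffusion of budget across depths in the dual factorisation; the paper, working on the primal side, pinpoints that any Varopoulos-type construction forces the normalising factor for $A(i)v$ to be $\sqrt{\var[p]}$ rather than $\sqrt{\mathrm{MaxInf}[p]}$ once low-degree monomials are present (\cref{rem:whynot}), and recasts the difficulty as the concrete \cref{que:generalization}. Your suggestion to port the one-block-decoupling reduction of \cite{o2015polynomial} into the $\norm{\cdot}_{\fcb,d}$ setting is reasonable but orthogonal to what the paper proposes.
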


Using a generalization through \emph{creation} and \emph{annihilation} operators of the construction used by Varopoulos to rule out a von Neumann's inequality for degree 3 polynomials \cite{Varopoulos:1974}, we can prove a particular case of Conjecture \ref{con:AAconjectureFCB}. 

\begin{restatable}{theorem}{AAFCBconjectureForMaxDegreePol}\label{theo:AAFCBconjectureForMaxDegreePol}
 	Let $d\in \mathbb{N}$. Let $p:\{-1,1\}^n\to \mathbb{R}$ be a homogeneous polynomial of degree $d$ and with $\norm{p}_{\fcb,d}\leq 1$. Then, the maximum influence of $p$ is at least $\var [p]^2$.
\end{restatable}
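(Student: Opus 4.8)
The plan is to reduce the statement to a lower bound on the Fourier completely bounded norm. Write $\tau:=\max_i\mathrm{Inf}_i[p]$ for the maximum influence (recall $\mathrm{Inf}_i[p]=\sum_{S\ni i}\hat p(S)^2$, and $\var[p]=\sum_{|S|=d}\hat p(S)^2$ since $p$ is homogeneous of degree $d\ge1$). I claim it suffices to prove that every nonzero homogeneous degree-$d$ polynomial satisfies $\norm{p}_{\fcb,d}^2\ge \var[p]^2/\tau$: combining this with the hypothesis $\norm{p}_{\fcb,d}\le 1$ immediately gives $\tau\ge\var[p]^2$. So the whole task becomes: given an arbitrary homogeneous $p$ of degree $d$, exhibit a triple $(u,v,A)$ with Boolean behavior of degree $d$ for which $|p(u,v,A)|\ge\var[p]/\sqrt{\tau}$.

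I would build such a triple by a Fock-space construction, generalizing from degree $3$ to degree $d$ the operators Varopoulos used to defeat von Neumann's inequality \cite{Varopoulos:1974}, and crucially using both creation and annihilation operators. The underlying Hilbert space $\HS=\HS_0\oplus\HS_1\oplus\cdots\oplus\HS_d$ is graded by ``particle number'' and truncated at level $d$, and $A(i)$ is a suitable combination $\adag_i+(\text{an annihilation correction involving }a_i)$. The annihilation part is what makes $A(i)$ behave like the Boolean variable $x_i$, which squares to $1$, rather than like a pure raising operator; the truncation at level $d$ is what lets the triple satisfy only the degree-$\le d$ relations and thereby beat any honest representation of $\F_2^n$ (which only yields $\norm{p}_{\fcb,d}\ge\norm{p}_\infty$ or $\norm{p}_{\fcb,d}\ge\sqrt{\var[p]}$). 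The scalars in the construction are calibrated so that (i) the triple genuinely has Boolean behavior of degree $d$, i.e.\ every identity that holds among length-$d$ products of the canonical generators of $\F_2^n$ is reproduced on the (finitely many) subspaces that length-$\le d$ words reach from $v$, and (ii) the $A(i)$ remain contractions (or satisfy whatever size condition the definition of $\norm{\cdot}_{\fcb,d}$ builds into Boolean behavior), with the available ``budget'' governed by $\tau$.

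With the triple in hand the evaluation is short: take $v\in\HS_0$ to be the vacuum; then in each product $A(i_1)\cdots A(i_d)v$ only the all-creation term survives, so it lands in $\HS_d$ and, for distinct indices $i_1<\cdots<i_d$, equals (up to the chosen normalization) the level-$d$ vector whose weight we arranged to be proportional to $\hat p(\{i_1,\dots,i_d\})$. Choosing $u\in\HS_d$ to be the matching superposition $\sum_{|S|=d}\hat p(S)\,|S\rangle$ (renormalized), the inner product $\langle u,\,A(i_1)\cdots A(i_d)v\rangle$ becomes proportional to $\hat p(\{i_1,\dots,i_d\})$, so summing the Fourier expansion yields $p(u,v,A)$ proportional to $\sum_{|S|=d}\hat p(S)^2=\var[p]$, while the normalization from (ii) supplies the factor $1/\sqrt{\tau}$. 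Homogeneity of $p$ is used essentially here: only products of exactly $d$ generators ever appear, so Boolean behavior has to be certified only at the top degree, not at every degree $\le d$ — which is precisely why the theorem is restricted to homogeneous polynomials.

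The main obstacle is step (i): certifying that the creation–annihilation operators, after truncation and the chosen rescaling, satisfy all the identities forced by Boolean behavior of degree $d$ — in particular commutativity of the generators together with the ``$x_i^2=x_j^2$''-type relations among length-$d$ words — while staying contractive. This is delicate because the obvious candidates all fail: honest representations of $\F_2^n$ and exterior-algebra (fermionic) operators either commute but cannot beat $\norm{p}_\infty$ or do not commute at all, and symmetric-algebra (bosonic) operators cannot realize $x_i^2=x_j^2$. The Varopoulos-type construction threads this needle exactly because the relations only need to hold on the finitely many subspaces reached from $v$ and $u$. A secondary, more computational difficulty is tuning the scalars so that the normalization in (ii) comes out as $\sqrt{\tau}$ with no loss, giving the clean exponent $\var[p]^2$ with no spurious constant or degree dependence.
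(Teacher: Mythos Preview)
Your overall plan matches the paper's: build $(u,v,A)\in\mathscr{BB}^d$ witnessing $\|p\|_{\fcb,d}\ge\var[p]/\sqrt{\tau}$ and rearrange. The gap is in where the Fourier data sit. You take $u=\sum_{|S|=d}\hat p(S)\,|S\rangle$ renormalized, with $A(i)$ a creation operator plus an annihilation correction; but if the $A(i)$ carry no Fourier weight then $A(i_1)\cdots A(i_d)v$ is a fixed multiple of $|S\rangle$, the normalizing factor of $u$ is $\sqrt{\var[p]}$, and one only recovers $\|p\|_{\fcb,d}\ge\sqrt{\var[p]}$ --- exactly the bound you flag as inadequate. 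Your phrase ``whose weight we arranged to be proportional to $\hat p(\{i_1,\dots,i_d\})$'' suggests instead that the weights live in the $A(i)$'s, but then $u$ should be a single unit vector, not a $\hat p$-weighted superposition; you cannot have both. Either way, the mechanism by which the \emph{contraction} condition on $A(i)$ forces the factor $1/\sqrt{\tau}$ is never made concrete, and that is the heart of the argument.

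The paper's construction is simpler than you anticipate and loads all Fourier weight into a single matrix step. With an orthonormal system $\{v\}\cup\{f_S:S\subset[n],\,|S|\le d-1\}$ it sets
\[
A(i)v=\sum_{\substack{S\ni i\\|S|=d}}\frac{\hat p(S)}{\sqrt{\tau}}\,f_{S-\{i\}},\qquad A(i)f_S=\delta_{S\ni i}\,f_{S-\{i\}},\qquad A(n{+}1)=0,\qquad u=f_\emptyset.
\]
Contractivity of $A(i)$ is then literally the inequality $\|A(i)v\|^2=\mathrm{Inf}_i[p]/\tau\le 1$; that is how $\tau$ enters, not through $u$. Boolean behaviour is also not the obstacle you describe: any $\mathbf i\in[n{+}1]^d$ with a repeated entry from $[n]$ or with an occurrence of $n{+}1$ is annihilated to $0$ automatically by this rule, so no operator-level commutativity or ``$x_i^2=x_j^2$'' identities need to be checked; for $\mathbf i$ with $d$ distinct entries in $[n]$ one reads off $\langle f_\emptyset,A(i_1)\cdots A(i_d)v\rangle=\hat p(S_{\mathbf i})/\sqrt{\tau}$ in any order.
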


The proof of the homogeneous case does not straightforwardly generalize (see \cref{rem:whynot}), but it suggests a way to solve the general case (see \cref{rem:howtogeneralize}). In particular, we propose Question \ref{que:generalization} (that reminds of tensor networks and almost-quantum correlations), which if answered affirmatively would imply \cref{con:AAconjectureFCB}.

\cref{theo:AAFCBconjectureForMaxDegreePol} is the first result concerning the AA conjecture whose constant has no dependence on the degree (to prove \cref{con:needforstructure} we could afford a polynomial dependence on the degree). Also, it requires considerably fewer algebraic constraints than the other particular cases for which we know AA conjecture to hold. In addition,  thanks to \cref{theo:QQAareFCBPolynomials}, it can be interpreted directly in terms of quantum query algorithms.

\begin{restatable}{corollary}{AAFCBconjectureForMaxDegreePolv2}\label{theo:AAFCBconjectureForMaxDegreePolv2}
	Let $d\in \mathbb{N}$. Let $\mathcal{A}$ be a $d$-query quantum algorithm whose output is a homogeneous polynomial $p:\{-1,1\}^n\to \mathbb{R}$ of degree $2d$. Then, the maximum influence of $p$ is at least $\var [p]^2$.
\end{restatable}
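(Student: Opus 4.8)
The plan is to obtain this corollary directly by chaining the two main theorems of the paper; there is essentially no new work beyond checking that the normalization delivered by the characterization of quantum query algorithms is exactly the hypothesis demanded by the homogeneous case of the conjecture.

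First I would invoke \cref{theo:QQAareFCBPolynomials}. Since $p$ is the output of a $d$-query quantum algorithm, that theorem gives that $\deg p \le 2d$ and that $\norm{p}_{\fcb,2d}\le 1$. By hypothesis $p$ is homogeneous of degree exactly $2d$, so writing $d' := 2d$ we are in the situation of a homogeneous polynomial of degree $d'$ satisfying $\norm{p}_{\fcb,d'}\le 1$. Then I would apply \cref{theo:AAFCBconjectureForMaxDegreePol} with its parameter $d$ instantiated as $d' = 2d$; this immediately yields that the maximum influence of $p$ is at least $\var[p]^2$, which is the assertion.

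The one point that needs care — and the only place a slip could occur — is that the index of the Fourier completely bounded norm appearing in \cref{theo:AAFCBconjectureForMaxDegreePol} must be taken equal to the degree of the polynomial, namely $2d$ here and not $d$. This is exactly the norm controlled by \cref{theo:QQAareFCBPolynomials}, so the two bounds line up with no loss. Beyond that, all the substance has already been established: \cref{theo:QQAareFCBPolynomials} supplies the equivalence between $d$-query algorithms and degree-$2d$ polynomials with $\norm{\cdot}_{\fcb,2d}\le 1$, and \cref{theo:AAFCBconjectureForMaxDegreePol} supplies the influence bound for homogeneous polynomials via the creation/annihilation-operator construction. Consequently there is no real obstacle in this corollary; it is a packaging of those two results.
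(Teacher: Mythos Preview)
Your proposal is correct and matches the paper's intended derivation: the corollary is presented there as an immediate consequence of \cref{theo:QQAareFCBPolynomials} and \cref{theo:AAFCBconjectureForMaxDegreePol}, with no separate proof given. Your observation that the norm index $2d$ produced by the former is exactly the degree parameter required by the latter is precisely the point.
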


With a similar construction as the one we used for \cref{theo:AAFCBconjectureForMaxDegreePol}, we can reprove the results of \cite{bansal2022influence} regarding the influence of block-multilinear completely bounded polynomials. These polynomials have a particular algebraic structure and also a normalization condition when evaluated on matrix inputs (see \cref{subsec:AAforBCB} below). 

\begin{restatable}{theorem}{AAforBCB}\label{theo:AAforBCB}
	Let $d\in\mathbb{N}$. Let $p:\{-1,1\}^{n\times d}\to \mathbb{R}$ be a block-multilinear degree $d$ polynomial with $\norm{p}_{\cb}\leq 1$. Then, $p$ has a variable of influence at least $(\var [p]/d)^2$. What is more, if $p$ is homogeneous of degree $d$, then it has a variable of influence at least $\var [p]^2$.   
\end{restatable}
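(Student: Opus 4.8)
The plan is to mirror the argument for \cref{theo:AAFCBconjectureForMaxDegreePol}: first reduce the general claim to the homogeneous one, and then prove the homogeneous case by an explicit Varopoulos\nobreakdash-type construction of completely bounded test inputs. For the reduction, write $p=\sum_{j=0}^{d}p_{=j}$ for the decomposition into parts in which each monomial uses exactly $j$ of the $d$ blocks (using the convention $x^{(k)}_0=1$ for the dummy variables), so that $\var[p]=\sum_{j\ge 1}\var[p_{=j}]$. Averaging the matrix inputs of every block against a common phase $e^{\bfi\theta}$ and applying the triangle inequality gives $\norm{p_{=j}}_{\cb}\le\norm{p}_{\cb}\le 1$ for each $j$; and since $p$ is block-multilinear, $\mathrm{Inf}_{(k,i)}[p]=\sum_{j}\mathrm{Inf}_{(k,i)}[p_{=j}]\ge\mathrm{Inf}_{(k,i)}[p_{=j^\ast}]$, where $j^\ast$ maximizes $\var[p_{=j^\ast}]$, and $\var[p_{=j^\ast}]\ge\var[p]/d$. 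So the bound $(\var[p]/d)^2$ follows from the homogeneous bound $\var^2$ applied to $p_{=j^\ast}$ (a homogeneous block-multilinear polynomial, to which the construction below applies verbatim).

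For the homogeneous case, let $p=\sum_{\vec i\in[n]^d}a_{\vec i}\prod_{k=1}^{d}x^{(k)}_{i_k}$ be homogeneous of degree $d$ with $\norm{p}_{\cb}\le 1$. Block-multilinearity gives $\mathrm{Inf}_{(k,i)}[p]=\norm{\partial_{(k,i)}p}_2^2=\sum_{\vec j:\,j_k=i}a_{\vec j}^2$, and homogeneity gives $\sum_{i}\mathrm{Inf}_{(k,i)}[p]=\var[p]$ for every block $k$. Put $M:=\max_{k,i}\mathrm{Inf}_{(k,i)}[p]$. The heart of the proof is to build a graded Hilbert space $\HS=\HS_0\oplus\cdots\oplus\HS_d$, unit vectors $u\in\HS_0$ and $v\in\HS_d$, and tuples of operators $X^{(k)}=(X^{(k)}_i)_i$ that are valid for the block-multilinear completely bounded norm of \cref{subsec:AAforBCB} (the appropriate per-block row/column contraction conditions), such that
\begin{equation*}
	\langle u,\ X^{(1)}_{i_1}X^{(2)}_{i_2}\cdots X^{(d)}_{i_d}\,v\rangle=\frac{a_{\vec i}}{\sqrt{M}}\qquad\text{for all }\vec i\in[n]^d .
\end{equation*}
Granting this, evaluating $p$ on these inputs yields $\langle u,\,p(X)\,v\rangle=\var[p]/\sqrt{M}$, so $\norm{p}_{\cb}\le 1$ forces $\var[p]/\sqrt{M}\le 1$, i.e.\ $M\ge\var[p]^2$. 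That is precisely the homogeneous statement, and the first statement then follows from the reduction.

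The construction itself is the generalization, through creation and annihilation operators, of Varopoulos's degree-$3$ device: the word $X^{(1)}_{i_1}\cdots X^{(d)}_{i_d}$ should traverse the grading from $\HS_d$ down to $\HS_0$, reading off one index per level, with the creation/annihilation weights of block $k$ chosen proportionally to $\sqrt{\mathrm{Inf}_{(k,i)}[p]}$. The identity $\sum_i\mathrm{Inf}_{(k,i)}[p]=\var[p]$ is exactly what makes these weights assemble into a genuine block contraction (the influences playing the role of a subprobability vector), and the remaining scalars are absorbed into $u$ and $v$; the factor $1/\sqrt{M}$ appears as the renormalization needed to keep $u$ (or $v$) a unit vector while no weight exceeds a contraction.

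The delicate point — and the main obstacle — is to carry out this construction so that all three requirements hold simultaneously: (i) the whole coefficient tensor $(a_{\vec i})$ is realized as a single ``matrix product'' between two unit vectors; (ii) the per-block contraction constraints of the $\cb$-norm are met; and (iii) the overall scalar is exactly $1/\sqrt{M}$ rather than something much smaller — a cruder choice of weights (e.g.\ uniform ones, or pure creation operators on the truncated tensor algebra) only delivers a bound like $\norm{p}_{\cb}\gtrsim\var[p]/\mathrm{poly}(n)$, which is useless. Requirement (iii) is what forces the weights to track the influences and is where the improved, randomness-free constant comes from; it is also the step that uses homogeneity (compare \cref{rem:whynot}), which is why the non-homogeneous statement is obtained only through the factor-$d$ loss in the reduction rather than directly.
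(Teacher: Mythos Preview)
Your high-level plan coincides with the paper's (pick the level $j^\ast$ with $\var[p_{=j^\ast}]\ge\var[p]/d$, show $\|p_{=j^\ast}\|_{\cb}\le\|p\|_{\cb}$, then exhibit test matrices achieving $\langle u, X^{(1)}_{i_1}\cdots X^{(d)}_{i_d}v\rangle=a_{\vec i}/\sqrt{M}$), but the construction sketch rests on a misconception about the norm. The completely bounded norm defined in \cref{subsec:AAforBCB} requires each individual matrix $A_s(i)$ to have operator norm at most $1$; there is no ``per-block row/column contraction condition'', and the phrase ``the influences playing the role of a subprobability vector'' points to a Haagerup-type constraint $\sum_i X^{(k)}_i(X^{(k)}_i)^\ast\preceq I$, which is a different (larger) norm. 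Accordingly, the identity $\sum_i\mathrm{Inf}_{(k,i)}[p]=\var[p]$ is \emph{not} what makes the test operators contractions. In the paper's construction only one block (the rightmost, acting on $v$) carries the coefficient data via $A_b(i)v=\sum_S \hat p_{=D}(S\cup\{(b,i)\})\,f_S/\sqrt{M}$, and this single operator is a contraction precisely because $\|A_b(i)v\|^2=\mathrm{Inf}_{(b,i)}[p_{=D}]/M\le 1$; all other blocks act by $0/1$ annihilation operators $f_S\mapsto\delta_{(b,i)\in S}f_{S\setminus\{(b,i)\}}$. A construction with ``weights of block $k$ proportional to $\sqrt{\mathrm{Inf}_{(k,i)}}$'' cannot yield $a_{\vec i}/\sqrt{M}$: the product of such weights factors over $k$, whereas $a_{\vec i}$ does not.

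Two smaller points on the reduction. Your phase-averaging argument is a legitimate alternative to the paper's tensoring with a shift matrix $B=\sum_s e_s e_{s+1}^T$, but it uses complex contractions while the paper works over real matrices throughout; you would need to justify that passage. And $p_{=j^\ast}$ is not literally a ``homogeneous degree-$d$'' block-multilinear polynomial --- its monomials use varying $j^\ast$-element subsets of the $d$ blocks --- so the homogeneous construction does not apply verbatim; the paper handles this by building the test matrices directly for $p_{=D}$ (allowing the block labels $b_1<\dots<b_D$ to vary) rather than invoking the homogeneous case as a black box.
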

\cref{theo:AAforBCB} corresponds to \cite[Theorem 1.4]{bansal2022influence}, where  Bansal et al.\ proved the same result but with influences at least $\var [p]^2/[e(d+1)^4]$ in the general case and with $\var [p]^2/(d+1)^2$ in the homogeneous degree $d$ case. Their proofs involve evaluating $p$ in \emph{random infinite dimensional matrix inputs}, which they can control using ideas of free probability. However, our proof evaluates $p$ in explicit finite dimensional matrix inputs, is shorter and obtains better constants. In particular, our constant for the homogeneous case is optimal.

\subsection{Some notation} Given $n\in \mathbb{N}$, $[n]$ denotes the set $\{1,\dots,n\}$, $[n]_0$ the set $\{0,1,\dots,n\}$, $M_n$ the space of $n\times n$ real matrices, $B_n\subset M_n$ the space of $n\times n$  real matrices with operator norm (largest singular value) at most $1$, and $S^{n-1}\subset \mathbb{R}^n$ the set of vectors with euclidean norm 1. Given, $d,n\in \mathbb N$, we use $\ind i$ to denote a multi-index in 
$[n]^d$. 

\section{The Fourier completely bounded $d$-norms}
There is a vast theory concerning the properties of multilinear maps $T:\mathbb{R}^n\times\dots\times\mathbb{R}^n\to \mathbb{R}$ that are completely bounded, i.e., bounded when they are extended to matrix domains \cite{paulsenoperatoralgebras}. However, to the best of our knowledge, there is no notion of being \emph{completely bounded} for polynomials $p:\{-1,1\}^n\to\mathbb{R}$ defined on the Boolean hypercube. Here, we propose a matrix notion of behaving like a Boolean string. Then, using the Fourier expansion of these polynomials we define the evaluation of the polynomials on these matrix inputs that behave like Boolean strings. Finally, we introduce the Fourier completely bounded $d$-norms and prove a few of their properties. 

Every $p:\{-1,1\}^n\to\mathbb{R}$ can be written as
\begin{equation}\label{eq:FourierExpansion}
	p(x)=\sum_{S\subset [n]}\hat{p}(S)\prod_{i\in S}x(i),
\end{equation}
where $\hat{p}(S)$ are the Fourier coefficients of $p$. We say that $p$ has degree at most $d$ if $\hat{p}(S)=0$ for every $|S|>d$, where $|S|$ denotes the cardinality of $S$. 

We will be interested on simulating the behavior of bit strings $x\in\{-1,1\}^n\times \{1\}$ with one extra frozen variable\footnote{The extra variable set to $1$ is there because quantum query algorithms query a controlled bit string, and not a non-controlled version, which would not require that extra variable.}. Given $d\in\mathbb{N}$ and $\ind{i},\ind{j}\in [n+1]^d$ we say that $\ind{i}\sim\ind{j}$, if 
\begin{equation}\label{eq:BBscalars}
	x(i_1)\dots x(i_d)=x(j_1)\dots x(j_d)\ \mathrm{for\ every}\ x\in \{-1,1\}^n\times\{1\}.
\end{equation}
In other words, if we define $$S_{\ind{i}}:=\{k\in [n]: \ k\text{ occurs an odd number of times in }\ind{i}\},$$ then $\ind i \sim \ind j$ if and only if $S_{\ind i}=S_{\ind j}$. Note that $n+1$ does not belong to these sets $S_{\ind i}$. Given $S\subset [n]$ with $|S|\leq d$, we write $[\ind i^S]$ to denote the equivalence class of indices $\ind i$ such that $S_{\ind{i}}=S$.

\begin{definition}
	Let $n,d,m\in\mathbb{N}$. Let $u,v\in S^{m-1}$ and let $A\in (B_m)^n$. We say that $(u,v,A)$ has \emph{Boolean behavior of degree d} if $$\langle u,A(i_1)\dots A(i_d)v\rangle=\langle u,A(j_1)\dots A(j_d)v\rangle$$ for all $\ind{i},\ind{j}\in [n+1]^d$ such that $\ind i\sim\ind j$. We call $\mathscr{BB}^d$ to the set of $(u,v,A)$ with Boolean behavior of degree $d$. 
\end{definition}

Informally, having Boolean behavior of degree $d$ means that the relations of \cref{eq:BBscalars} and some normalization conditions are satisfied. In particular, for any bit string $x\in \{-1,1\}^n\times \{1\}$ and any $d\in \mathbb{N}$, we have that $(1,1,x)$ has Boolean behavior of degree $d$.

Also note that given $d\in \mathbb{N}$, for every $S\in [n]$ with $|S|\leq d$ there is at least one $\ind i \in [n+1]^d$ such that $S_{\ind{i}}=S.$  Thus, given $(u,v,A)$ with Boolean behavior of degree $d$, for every $|S|\leq d$ the product $\prod_{i \in S} x(i)$ can be simulated (in a unique manner) by $\langle u,A(i_1^S)\dots A(i_d^S)v\rangle$. In particular, this means that for a polynomial $p$ of degree at most $d$, we can define through \cref{eq:FourierExpansion} an evaluation of $p$ on every $(u,v,A)$ that has Boolean behavior of degree $d$, which leads to the definition Fourier completely bouded $d$-norm.

\begin{definition}\label{def:fcbnorm}
	Let $p:\{-1,1\}^n\to\mathbb{R}$ be a polynomial of degree at most $d$. Then, its \emph{Fourier completely bounded $d$-norm} is defined by
	$$\norm{p}_{\fcb,d}=\sup_{(u,v,A)\in \mathscr{BB}^d}\sum_{S\subset [n],|S|\leq d}\hat{p}(S)\langle u,A(i_1^S)\dots A(i_d^S)v\rangle.$$
\end{definition}

The rest of the section is devoted to prove a few results concerning the Fourier completely bounded $d$-norms. First of all we show that, indeed, they are norms. 
\begin{proposition}
	Let $d\in\mathbb{N}$. Then, $\norm{\cdot}_{\fcb,d}$ is a norm in the space of polynomials $p:\{-1,1\}^n\to \mathbb{R}$ of degree at most $d$.
\end{proposition}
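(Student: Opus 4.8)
The plan is to verify the three norm axioms for $\norm{\cdot}_{\fcb,d}$ on the finite-dimensional vector space $V_{n,d}$ of polynomials $p:\{-1,1\}^n\to\mathbb{R}$ of degree at most $d$. The only genuinely non-obvious axiom is definiteness, so I will handle positive homogeneity and the triangle inequality quickly and then focus there.

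\emph{Finiteness, homogeneity, triangle inequality.} First I would note that $\norm{p}_{\fcb,d}$ is finite: for $(u,v,A)\in\mathscr{BB}^d$ we have $\norm{u}=\norm{v}=1$ and each $A(i)\in B_m$, so $|\langle u,A(i_1^S)\cdots A(i_d^S)v\rangle|\leq 1$, whence $\norm{p}_{\fcb,d}\leq\sum_{|S|\leq d}|\hat p(S)|<\infty$. (In fact one should check the supremum is nonnegative and that the expression is well-defined, i.e.\ independent of the representative $\ind i^S$ chosen in each class $[\ind i^S]$ — but that independence is exactly the defining property of $\mathscr{BB}^d$, already observed in the text.) For homogeneity, $\norm{\lambda p}_{\fcb,d}=|\lambda|\,\norm{p}_{\fcb,d}$ follows because $\widehat{\lambda p}(S)=\lambda\hat p(S)$ and $\mathscr{BB}^d$ is closed under swapping $u\mapsto -u$ (which negates every term simultaneously), so the supremum of the signed sum equals the supremum of its absolute value and scales correctly; the case $\lambda=0$ is trivial. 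The triangle inequality is immediate from linearity of the evaluation map $(u,v,A)\mapsto\sum_S\hat p(S)\langle u,A(i_1^S)\cdots A(i_d^S)v\rangle$ in $p$ together with subadditivity of $\sup$.

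\emph{Definiteness.} It remains to show $\norm{p}_{\fcb,d}=0\implies p=0$. Here I would use the fact, noted right before the definition, that for every bit string $x\in\{-1,1\}^n\times\{1\}$ the triple $(1,1,x)$ (with $m=1$) lies in $\mathscr{BB}^d$, and that evaluating $p$ on $(1,1,x)$ recovers $p(x)=\sum_{|S|\leq d}\hat p(S)\prod_{i\in S}x(i)$. Hence $\norm{p}_{\fcb,d}\geq|p(x)|$ for all $x\in\{-1,1\}^n$, i.e.\ $\norm{p}_{\fcb,d}\geq\norm{p}_\infty$. If $\norm{p}_{\fcb,d}=0$ then $p$ vanishes on all of $\{-1,1\}^n$, and since a polynomial of degree at most $d$ is determined by its values on the hypercube (the monomials $\prod_{i\in S}x(i)$ are linearly independent as functions on $\{-1,1\}^n$), we get $\hat p(S)=0$ for all $S$, i.e.\ $p=0$.

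I do not anticipate a real obstacle: the only point requiring care is making sure the evaluation $\sum_S\hat p(S)\langle u,A(i_1^S)\cdots A(i_d^S)v\rangle$ is well-posed, which is precisely why $\mathscr{BB}^d$ was defined via the relation $\ind i\sim\ind j$, so this is bookkeeping rather than a difficulty. The key structural input is the trivial-but-essential observation that scalar Boolean strings embed into $\mathscr{BB}^d$, giving $\norm{p}_{\fcb,d}\geq\norm{p}_\infty$ and thereby reducing definiteness to the standard fact that $\norm{\cdot}_\infty$ is a norm on $V_{n,d}$.
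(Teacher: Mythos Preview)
Your proposal is correct and follows exactly the same route as the paper's proof: triangle inequality and homogeneity are immediate from the definition, and definiteness comes from the inequality $\norm{p}_{\fcb,d}\geq\norm{p}_\infty$ obtained by plugging in scalar Boolean strings $(1,1,x)\in\mathscr{BB}^d$. The paper's proof is a two-line sketch, while you have carefully spelled out the bookkeeping (finiteness, the $u\mapsto -u$ symmetry needed for absolute homogeneity); there is no substantive difference in approach.
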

\begin{proof}
	It clearly satisfies the triangle inequality and is homogeneous. Also, if $p=0$ then $\norm{p}_{\fcb,d}=0$, and vice versa, because $\norm{p}_{\infty}\leq \norm{p}_{\fcb,d}$. 
\end{proof} 

One nice property of these norms is that they can be computed as semidefinite programs (SDPs), which are optimization problems involving linear positive semidefinite constraints whose value can be efficiently approximated (see \cite{laurent2005semidefinite} for an introduction to SDPs). 

\begin{proposition}\label{prop:fcbasanSDP}
	Let $p:\{-1,1\}^n\to\mathbb{R}$ be a polynomial of degree at most $d$. Then, its Fourier completely bounded $d$-norm can be written as the following SDP
	\begin{align}
		\label{eq:fcbsdp}\norm{p}_{\fcb,d}=& \sup \sum_{S\in [n],|S|\leq d}\hat{p}(S)\langle u,v_{\ind{i}^S}\rangle,\\
		\nonumber& u,v,v_{\ind{i}}\in \mathbb{R}^m,\ m\in \mathbb{N},\ \ind{i}\in [n+1]^s,\ s\in [d],\\
		\label{eq:BBinSDP}& \langle u,v_{\ind{i}}\rangle=\langle u,v_{\ind{j}}\rangle,\ \mathrm{if}\ \ind i\sim \ind j,\ \ind i,\ind j\in [n+1]^d,\\
		\label{eq:unitvectinSDP}&\langle u,u\rangle=\langle v,v\rangle=1,\\
		\label{eq:contrinSDP}& \mathrm{Gram}_{\substack{\ind{j}\in [n+1]^{s}, \\s\in [d-1]_0}}\{v_{i\ind{j}} \}\preccurlyeq \mathrm{Gram}_{\substack{\ind{j}\in [n+1]^{s}, \\s\in [d-1]_0}}\{v_{\ind{j}}\},\ \mathrm{for}\ i\in [n+1],
	\end{align}
	where we by $v_\ind{j}$ with $\ind j\in [n+1]^0$ we mean $v$, $\mathrm{Gram}$ denotes the gram matrix and the symbol `$\preccurlyeq$' the usual matrix inequality. 
\end{proposition}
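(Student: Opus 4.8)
The plan is to show that the supremum defining $\norm{p}_{\fcb,d}$ is attained by the same family of data that parametrizes the SDP, by a two-way comparison. The key observation is that in both problems the only way $(u,v,A)$ enters the objective is through the inner products $\langle u, A(i_1^S)\cdots A(i_d^S)v\rangle$; so I would introduce, for each multi-index $\ind j \in [n+1]^s$ with $s\in[d]_0$, the vector $v_{\ind j}:=A(j_1)\cdots A(j_s)v$ (reading right to left), with the convention $v_{\ind j}=v$ for $s=0$. With this substitution the Fourier objective becomes exactly $\sum_S \hat p(S)\langle u, v_{\ind i^S}\rangle$, matching \eqref{eq:fcbsdp}. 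The heart of the argument is to check that the constraints \eqref{eq:BBinSDP}--\eqref{eq:contrinSDP} are precisely the shadow, on these inner-product data, of the constraints "$u,v\in S^{m-1}$, $A\in(B_m)^n$, Boolean behavior of degree $d$."

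First I would verify the easy direction, $\norm{p}_{\fcb,d}\le \mathrm{SDP}$. Given $(u,v,A)\in\mathscr{BB}^d$, define the $v_{\ind j}$ as above. Constraint \eqref{eq:unitvectinSDP} is immediate since $u,v\in S^{m-1}$; constraint \eqref{eq:BBinSDP} is a direct restatement of Boolean behavior of degree $d$ (note that if $\ind i\sim\ind j$ as length-$d$ indices then $\langle u,v_{\ind i}\rangle = \langle u, A(i_1)\cdots A(i_d)v\rangle=\langle u,A(j_1)\cdots A(j_d)v\rangle=\langle u,v_{\ind j}\rangle$); and constraint \eqref{eq:contrinSDP} is the statement that left multiplication by $A(i)$ is a contraction on $\mathbb{R}^m$, so that the map sending the vector $v_{\ind j}$ (for all $\ind j$ of length $\le d-1$) to $v_{i\ind j}=A(i)v_{\ind j}$ does not increase the Gram matrix — precisely $G\mapsto A(i)^{\!*}A(i)$-conjugation lowering the Gram operator because $A(i)^{\!*}A(i)\preccurlyeq \Id$. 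So any feasible point of the $\fcb$ supremum gives a feasible SDP point with the same objective value.

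The reverse inequality $\mathrm{SDP}\le\norm{p}_{\fcb,d}$ is where the real work lies, and I expect it to be the main obstacle. Starting from an SDP-feasible family $\{u,v,v_{\ind j}\}$, I must manufacture genuine matrices $A(i)\in B_m$ (on some possibly larger space) realizing $v_{i\ind j}=A(i)v_{\ind j}$ together with $u,v$ of unit norm, so that $(u,v,A)\in\mathscr{BB}^d$. The natural route is: work in the span of the $v_{\ind j}$'s; constraint \eqref{eq:contrinSDP} says exactly that the linear map $T_i$ defined on $\vspan\{v_{\ind j}: |\ind j|\le d-1\}$ by $T_i v_{\ind j}=v_{i\ind j}$ is well-defined (kernel of the source Gram is contained in kernel of the target Gram) and contractive (Gram inequality $\Rightarrow$ $\|T_i\|\le 1$). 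One then extends $T_i$ by $0$ on the orthogonal complement — or better, dilates to keep norms exactly controlled — to obtain $A(i)\in B_m$. A subtlety is that the recursion $v_{i\ind j}=A(i)v_{\ind j}$ must be consistent across the layers $s=0,1,\dots,d-1$ simultaneously; this is why the Gram inequality in \eqref{eq:contrinSDP} is indexed over all $\ind j$ of length $\le d-1$ at once rather than layer by layer. Once the $A(i)$ are in hand, Boolean behavior of degree $d$ follows because \eqref{eq:BBinSDP} forces $\langle u, A(i_1)\cdots A(i_d)v\rangle=\langle u,v_{\ind i}\rangle$ to depend only on $S_{\ind i}$, and the normalization \eqref{eq:unitvectinSDP} gives $u,v\in S^{m-1}$. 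I would also remark that since the objective and all constraints are invariant under replacing $m$ by a larger dimension and padding with zeros, one may freely assume a common ambient dimension, and that the supremum in \cref{def:fcbnorm} need not be attained — only approached — which is harmless for an SDP written as a supremum.
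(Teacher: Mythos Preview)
Your proposal is correct and follows essentially the same approach as the paper's proof: both directions use the substitution $v_{\ind j}=A(j_1)\cdots A(j_s)v$ for the inequality $\norm{p}_{\fcb,d}\le\mathrm{SDP}$, and for the reverse inequality both construct $A(i)$ on $\vspan\{v_{\ind j}:|\ind j|\le d-1\}$ by $A(i)v_{\ind j}=v_{i\ind j}$ (extended by $0$), using the Gram inequality \eqref{eq:contrinSDP} first to check well-definedness (kernel containment) and then contractivity. Your identification of the ``subtlety'' that all layers $s=0,\dots,d-1$ must be handled simultaneously is exactly the reason the paper indexes the Gram constraint over all $\ind j$ at once.
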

\begin{proof}
	Let $\vertiii{p}$ be the expression on the right-hand side of \cref{eq:fcbsdp}. Note that \cref{eq:BBinSDP} represents the relations of bit strings of \cref{eq:BBscalars}, while \cref{eq:unitvectinSDP,eq:contrinSDP} encode normalization conditions. 
	
	On the one hand, every $(u,v,A)\in \mathscr{BB}^d$ defines a feasible instance for $\vertiii{p}$ through $$v_{\ind{i}}:=A(i_1)\dots A(i_s)v$$
	for every $\ind{i}\in [n+1]^s$ and every $s\in [d]$. Given that the value of this instance is $$\sum_{S\subset [n],|S|\leq d}\hat{p}(S)\langle u,A(i_1^S)\dots A(i_d^S)v\rangle$$ we have that $\vertiii{p}\geq \norm{p}_{\fcb,d}.$
	
	On the other hand, let $u,v,v_{\ind{i}}\in\mathbb{R}^{m}$ be a feasible instance of $\vertiii{p}$. For $i\in [n+1]$ define $A(i)\in M_m$ as the linear map from $\mathbb{R}^m$ to $\mathbb{R}^m$ that takes $v_{\ind{j}}$ to $v_{i\ind{j}}$ for every $\ind j\in [n+1]^{s}$ and every $s\in [d-1]_0$, and it is extended to $0$ to the rest of $\mathbb{R}^m$. First of all, we should check that this is a correct definition. In other words, we should check that for every $\lambda\in \mathbb{R}^M$, with $M=(n+1)^{d-1}+\dots+(n+1)^0$, we have that $$\sum_{\ind{j}}\lambda_{\ind j}v_{\ind{j}}=0\ \implies\  \sum_{\ind{j}}\lambda_{\ind j}v_{i\ind{j}}=0.$$
	Indeed, we can prove something stronger: 
	\begin{align*}
		(\sum_{\ind{j}}\lambda_{\ind j}v_{i\ind{j}})^T\sum_{\ind{j}'}\lambda_{\ind j'}v_{i\ind{j}'}&=\lambda^T\mathrm{Gram}_{\substack{\ind{j}\in [n+1]^{s}, \\s\in [d-1]_0}}\{v_{i\ind{j}} \}\lambda\leq\lambda^T \mathrm{Gram}_{\substack{\ind{j}\in [n+1]^{s}, \\s\in [d-1]_0}}\{v_{\ind{j}}\}\lambda\\ 
		&=(\sum_{\ind{j}}\lambda_{\ind j}v_{\ind{j}})^T\sum_{\ind{j}'}\lambda_{\ind j'}v_{\ind{j}'}.
	\end{align*}
	The above calculation also proves that the $A(i)$'s are contractions, and thanks to \cref{eq:BBinSDP} it follows that $(u,v,A)$ has Boolean behavior of degree $d$. Finally, note that the value of this $(u,v,A)$ for $\norm{p}_{\fcb,d}$ is the same as the value of $(u,v,v_{\ind{j}})$ for $\vertiii{p}$, so $ \norm{p}_{\fcb,d}\geq \vertiii{p}.$
\end{proof}

Given $d, d'\in\mathbb{N}$ with $d'>d$ and a polynomial $p:\{-1,1\}^n\to\mathbb{R}$ of degree at most $d$, $\norm{p}_{\fcb,d}$ and $\norm{p}_{\fcb,d'}$ have different definitions, but they are comparable. In particular, we prove that the Fourier completely bounded $d$-norms are not increasing.

\begin{proposition}\label{prop:fcbD+1<fcbD}
	Let $p:\{-1,1\}^n\to\mathbb{R}$ be a polynomial of degree at most $d$. Then, $$\norm{p}_{\fcb,d+1}\leq \norm{p}_{\fcb,d}.$$
\end{proposition}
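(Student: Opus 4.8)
The plan is to show that any feasible instance for $\norm{p}_{\fcb,d+1}$ can be converted into a feasible instance for $\norm{p}_{\fcb,d}$ with the same objective value. Using the SDP formulation of \cref{prop:fcbasanSDP} is cleaner than working directly with the triples $(u,v,A)$, but morally the two approaches coincide: given $(u,v,A)\in\mathscr{BB}^{d+1}$, I want to produce $(u',v',A')\in\mathscr{BB}^d$ achieving the same value $\sum_S \hat p(S)\langle u,A(i_1^S)\cdots A(i_d^S)v\rangle$. Since $p$ has degree at most $d$, only words of length $\le d$ (equivalently, equivalence classes $[\ind i^S]$ with $|S|\le d$) enter the objective, so the extra ``layer'' of length-$(d+1)$ data in the $\fcb,d+1$ instance is pure slack that we can discard.

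First I would take a feasible instance $u,v,(v_{\ind i})$ for $\vertiii{p}_{d+1}$ as in \cref{prop:fcbasanSDP}, where now $\ind i$ ranges over $[n+1]^s$ for $s\in[d+1]$, the relation \eqref{eq:BBinSDP} holds for $\ind i\sim\ind j$ in $[n+1]^{d+1}$, and the Gram-matrix contraction \eqref{eq:contrinSDP} ranges over $\ind j\in[n+1]^s$, $s\in[d]_0$. The candidate $\fcb,d$ instance is obtained simply by forgetting all vectors $v_{\ind i}$ with $|\ind i|=d+1$ and keeping $u,v,(v_{\ind i})_{|\ind i|\le d}$. The objective value is literally unchanged because, as noted, the sum in \eqref{eq:fcbsdp} only references $v_{\ind i^S}$ with $|S|\le d$, and the equivalence class representative $\ind i^S\in[n+1]^d$ is the same in both SDPs. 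The unit-vector constraint \eqref{eq:unitvectinSDP} is inherited verbatim. The relation constraint \eqref{eq:BBinSDP} for the truncated instance (now for $\ind i\sim\ind j$ in $[n+1]^d$) follows from the $(d+1)$-level relations: if $\ind i\sim\ind j$ with $|\ind i|=|\ind j|=d$, then appending any fixed letter $k\in[n+1]$ gives $k\ind i\sim k\ind j$ in $[n+1]^{d+1}$, hence $\langle u,v_{k\ind i}\rangle=\langle u,v_{k\ind j}\rangle$; reconstructing $A(k)$ as in the proof of \cref{prop:fcbasanSDP} and using that $A(k)$ is well-defined and contractive on the span of the $v_{\ind i}$'s, one recovers $\langle u, v_{\ind i}\rangle = \langle u, v_{\ind j}\rangle$ — but in fact this is even more immediate since one can directly pick $k=n+1$ (a frozen coordinate), and the constraint $n+1\ind i \sim n+1 \ind j$ in $[n+1]^{d+1}$ holds precisely when $\ind i \sim \ind j$. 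Finally the Gram contraction \eqref{eq:contrinSDP} for the truncated instance ranges over $\ind j\in[n+1]^s$ with $s\in[d-1]_0$, which is a principal submatrix restriction of the $s\in[d]_0$ inequality in the original instance, and principal submatrices of a PSD-ordered pair remain PSD-ordered. Hence the truncated instance is feasible for $\vertiii{p}_d = \norm{p}_{\fcb,d}$, giving $\norm{p}_{\fcb,d+1}\le\norm{p}_{\fcb,d}$.

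The only mildly delicate point — and the one I would write out carefully — is the direction of the restriction in constraint \eqref{eq:contrinSDP}: one must check that dropping the top layer yields a \emph{principal submatrix} of both Gram matrices simultaneously (for the whole family, and for each $i\in[n+1]$), so that the matrix inequality is genuinely preserved under this truncation rather than being weakened in the wrong direction. This is straightforward once one observes that indexing by $\ind j\in[n+1]^s,\ s\in[d-1]_0$ picks out exactly the same rows and columns from $\mathrm{Gram}\{v_{i\ind j}\}$ and from $\mathrm{Gram}\{v_{\ind j}\}$, and that $R\preccurlyeq T \implies PRP^T \preccurlyeq PTP^T$ for the coordinate projection $P$. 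Everything else is bookkeeping, so I expect no real obstacle.
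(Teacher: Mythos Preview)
Your truncation argument has a genuine gap in two places.

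First, the claim that ``the objective value is literally unchanged'' is incorrect. In the SDP of \cref{prop:fcbasanSDP} at level $d+1$, the objective is $\sum_{|S|\le d}\hat p(S)\langle u,v_{\ind i^S}\rangle$ where the representatives $\ind i^S$ live in $[n+1]^{d+1}$, i.e.\ at the \emph{top} layer. These are exactly the vectors you discard. The representative $\ind i^S$ is \emph{not} the same in both SDPs: at level $d$ it lies in $[n+1]^d$, at level $d+1$ in $[n+1]^{d+1}$, and nothing in the SDP constraints identifies $\langle u,v_{\ind j}\rangle$ for $|\ind j|=d$ with $\langle u,v_{\ind i}\rangle$ for $|\ind i|=d+1$ even when $S_{\ind j}=S_{\ind i}$.

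Second, your verification of the relation constraint \eqref{eq:BBinSDP} at level $d$ does not go through. From $\ind i\sim\ind j$ in $[n+1]^d$ you correctly get $k\ind i\sim k\ind j$ in $[n+1]^{d+1}$ and hence $\langle u,v_{k\ind i}\rangle=\langle u,v_{k\ind j}\rangle$; but this says $\langle u,A(k)v_{\ind i}\rangle=\langle u,A(k)v_{\ind j}\rangle$, which does not imply $\langle u,v_{\ind i}\rangle=\langle u,v_{\ind j}\rangle$ unless $A(k)^Tu=u$. Picking $k=n+1$ does not help for the same reason. Concretely, $\mathscr{BB}^{d+1}\not\subset\mathscr{BB}^d$ in general.

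Both issues have the same fix, which is what the paper does: instead of truncating, \emph{shift} by the frozen coordinate. Given $(u,v,A)\in\mathscr{BB}^{d+1}$, set $\tilde v:=A(n+1)v/\|A(n+1)v\|$ and keep $\tilde u:=u$, $\tilde A:=A$. Then $(\tilde u,\tilde v,\tilde A)\in\mathscr{BB}^d$ because for $\ind i\sim\ind j$ in $[n+1]^d$ one has $\ind i(n+1)\sim\ind j(n+1)$ in $[n+1]^{d+1}$, and now the length-$(d+1)$ relation applies directly to $\langle u,A(i_1)\cdots A(i_d)A(n+1)v\rangle$. Moreover, for each $|S|\le d$ one may choose $\ind i^S\in[n+1]^{d+1}$ with last letter $n+1$, giving
\[
\langle u,A(i^S_1)\cdots A(i^S_{d+1})v\rangle=\|A(n+1)v\|\,\langle\tilde u,\tilde A(i^S_1)\cdots\tilde A(i^S_d)\tilde v\rangle,
\]
so the level-$(d+1)$ objective equals $\|A(n+1)v\|\le 1$ times a value witnessed at level $d$. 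In SDP language this is the reindexing $\tilde v_{\ind j}:=v_{\ind j(n+1)}/\|v_{(n+1)}\|$; your principal-submatrix observation for the Gram constraint then applies to \emph{this} family, not to the naive truncation.
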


\begin{remark}
	 \cref{prop:fcbD+1<fcbD} is coherent with \cref{theo:QQAareFCBPolynomials} (proved below), because allowing more queries to quantum algorithms only increases their power. \cref{theo:QQAareFCBPolynomials} also suggests that $\norm{p}_{\fcb,n}=\norm{p}_{\infty}$ should hold, because $n$ quantum queries should be enough to output any bounded polynomial. If true, alongside \cref{prop:fcbasanSDP,prop:fcbD+1<fcbD}, it would mean that $(\norm{p}_{\fcb,d})_{d\in [n]}$ is a decreasing hierarchy of SDPs that tend to $\norm{p}_{\infty}$.	
\end{remark}

\begin{proof}[ of \cref{prop:fcbD+1<fcbD}]
	Let $(u,v,A)$ have Boolean behavior of degree $d+1$. Then, $$(\tilde{u},\tilde{v},\tilde{A})=(u,\frac{A(n+1)v}{\|A(n+1)v\|},A)$$ has Boolean behavior of degree $d$. Also, given that $d+1>d$, we have that for every $S\subset [n]$ with $|S|\leq d$, there exists $\ind{i}\in [n+1]^{d+1}$ such that $S_\ind{i}=S$ and $i_{d+1}=n+1$ and  
	\begin{equation}\label{eq:fcbD+1<fcbD}
		\langle u, A(i_1)\dots A(i_{d+1})v\rangle=\norm{A(n+1)v}\langle \tilde{u}, \tilde{A}(i_1)\dots \tilde{A}(i_d)\tilde{v}\rangle 
	\end{equation}  
	$$ $$
	This way, 
	\begin{align*}
		\norm{p}_{\fcb,d+1}&=\sup_{(u,v,A)\in\mathscr{BB}^{d+1}} \sum_{S\subset [n],|S|\leq d}\hat{p}(S)\langle u, A(i^S_1)\dots A(i^S_{d+1})v\rangle\\
		&= \norm{A(n+1)v}\sup_{(u,v,A)\in\mathscr{BB}^{d+1}\ }\sum_{S\subset [n],|S|\leq d}\hat{p}(S)\langle \tilde{u}, \tilde{A}(i^S_1)\dots \tilde{A}(i^S_d)\tilde{v}\rangle \\
		&\leq \sup_{(\tilde{u},\tilde{v},\tilde{A}(i))\in\mathscr{BB}^d\ }\sum_{S\subset [n],|S|\leq d}\hat{p}(S)\langle \tilde{u}, \tilde{A}(i^S_1)\dots \tilde{A}(i^S_d)\tilde{v}\rangle\\
		&= \norm{p}_{\fcb,d},
	\end{align*}
	where in the equality we have used \cref{eq:fcbD+1<fcbD}, and in the inequality that $(\tilde{u},\tilde{v},\tilde{A})$ has Boolean behavior of degree $d$ and $\norm{A(n+1)v}\leq 1$.
\end{proof}

The next proposition states that $\norm{\cdot}_{\fcb,d}$ does not increase after restrictions, which is a relevant feature to ensure that Conjecture \ref{con:AAconjectureFCB} implies Conjecture \ref{con:needforstructure}. Given a polynomial $p:\{-1,1\}^n\to \mathbb{R}$ and $i\in [n]$, the \emph{restriction of $p$ to the $i$-th variable being set to $y\in\{-1,1\}$} is the polynomial $q:\{-1,1\}^{n-1}\to \mathbb{R}$ (whose variables we index with $x(1),\dots,x(i-1),x(i+1),\dots,x(n)$ for convenience) defined by $q(x):=p(x(1),\dots,x(i-1),y,x(i+1),\dots,x(n))$.
\begin{proposition}\label{prop:noincreaseafterrest}
	Let $p:\{-1,1\}^n\to \mathbb{R}$ be a polynomial of degree at most $d$ and let $i\in [n]$. Let $q:\{-1,1\}^{n-1}\to \mathbb{R}$ be the restriction of $p$ to the $i$-th variable being set to $y\in\{-1,1\}$. Then, 
	\begin{equation*}
		\norm{q}_{\fcb,d}\leq \norm{p}_{\fcb,d}.
	\end{equation*}
\end{proposition}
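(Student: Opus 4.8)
The plan is to reduce the $\fcb$-norm of a restriction $q$ to the $\fcb$-norm of $p$ by showing that any Boolean-behavior-of-degree-$d$ witness for $q$ can be upgraded, in a canonical way, to a Boolean-behavior-of-degree-$d$ witness for $p$ that realizes the same value. Concretely, suppose $(u,v,A)\in\mathscr{BB}^d$ is a near-optimal witness for $\norm{q}_{\fcb,d}$, where $A$ is now a string of matrices indexed by $[n]\setminus\{i\}$ (together with the frozen slot $n+1$). I would simply \emph{plug in a scalar for the $i$-th slot}: define $\tilde A(k)=A(k)$ for $k\neq i$, $\tilde A(n+1)=A(n+1)$, and $\tilde A(i)=y\cdot \Id_m$ (the sign $y$ times the identity). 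Since $y\in\{-1,1\}$, this is a contraction, and multiplying by $y\,\Id$ commutes with everything, so $(u,v,\tilde A)$ still satisfies all the Boolean-behavior relations of degree $d$ — indeed for any multi-indices over $[n+1]^d$, inserting or deleting an even number of $i$'s multiplies the relevant matrix product by $y^{(\text{even})}=1$ and an odd number by $y$, exactly mirroring the identity $x(i)^2=1$ that the new string is meant to simulate. Hence $(u,v,\tilde A)\in\mathscr{BB}^d$.

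Next I would check that evaluating $p$ on $(u,v,\tilde A)$ gives exactly $q$ evaluated on $(u,v,A)$. This is just bookkeeping with the Fourier expansion: grouping the sum $p(x)=\sum_{S}\hat p(S)\prod_{k\in S}x(k)$ according to whether $i\in S$, one has $q(x)=\sum_{S\not\ni i}(\hat p(S)+y\,\hat p(S\cup\{i\}))\prod_{k\in S}x(k)$. On the matrix side, for a set $S$ with $i\in S$ one can pick a representative multi-index $\ind i^S$ in which the slot $i$ appears exactly once; then $\tilde A(i_1^S)\cdots\tilde A(i_d^S)$ equals $y$ times the corresponding product over $S\setminus\{i\}$ built from $\tilde A$, which is the same matrix product that represents $S\setminus\{i\}$ for the witness $A$ of $q$. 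Summing, the value of $p$ at $(u,v,\tilde A)$ collapses term-by-term to the value of $q$ at $(u,v,A)$. Taking the supremum over near-optimal witnesses for $q$ then yields $\norm{q}_{\fcb,d}\le\norm{p}_{\fcb,d}$.

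The only mild subtlety — and the step I would be most careful about — is the handling of equivalence classes and representatives: the $\fcb$-norm is written using \emph{one} representative multi-index $\ind i^S$ per class, so I must make sure that for each $S$ containing $i$ I may choose the representative to have $i$ appearing exactly once (which is always possible, by just listing the elements of $S$ once each and padding with repeated copies of $n+1$), and that because $(u,v,\tilde A)\in\mathscr{BB}^d$ the value $\langle u,\tilde A(i_1^S)\cdots\tilde A(i_d^S)v\rangle$ does not depend on the representative chosen. With that observation the identification of the two evaluations is immediate, and no genuine analytic work is needed — the proposition is essentially a transparent "freeze a variable" argument once the right witness transformation is identified.
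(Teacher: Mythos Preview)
Your construction has a genuine gap: setting $\tilde A(i)=y\cdot\Id_m$ does \emph{not} in general yield a tuple $(u,v,\tilde A)$ with Boolean behavior of degree $d$. The Boolean-behavior condition requires the bracket to agree for \emph{every} pair of equivalent multi-indices in $[n+1]^d$, not only for pairs that differ by inserting or deleting occurrences of the index $i$. For a concrete failure, take $d=2$ and compare $\ind i=(i,i)$ with $\ind j=(n+1,n+1)$; both have $S_{\ind i}=S_{\ind j}=\emptyset$, so Boolean behavior demands
\[
\langle u,\tilde A(i)\tilde A(i)\,v\rangle=\langle u,\tilde A(n+1)\tilde A(n+1)\,v\rangle.
\]
With your choice the left side is $y^2\langle u,v\rangle=\langle u,v\rangle$ while the right side is $\langle u,A(n+1)^2 v\rangle$, and nothing in the hypothesis that $(u,v,A)$ is a $\mathscr{BB}^d$-witness for $q$ forces these to coincide. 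The underlying issue is that replacing $\tilde A(i)$ by the identity \emph{shortens} the matrix product, whereas the Boolean-behavior hypothesis on the $q$-witness only controls products of length exactly $d$; commutativity of $y\,\Id$ is irrelevant here.

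The paper's proof differs at exactly this point: it sets $\tilde A(i)=y\,A(n+1)$ rather than $y\,\Id_m$. Then every occurrence of $i$ in a multi-index is replaced by an occurrence of the frozen slot $n+1$, so the resulting product is still a length-$d$ product over $([n+1]\setminus\{i\})^d$, and the Boolean-behavior relations of the original $q$-witness apply directly to verify that $(u,v,\tilde A)\in\mathscr{BB}^d$. Once this substitution is made, the remainder of your outline---grouping the Fourier coefficients of $p$ according to whether $i\in S$, using $\hat q(S)=\hat p(S)+y\,\hat p(S\cup\{i\})$, and identifying the two evaluations---is correct and matches the paper's argument.
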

\begin{proof}
	Let $(u,v,A(1),\dots,A(i-1),A(i+1),\dots A(n+1))$ with Boolean behavior of degree $d$. Define $\tilde{u}:=u$, $\tilde{v}:=v$ and  $\tilde{A}(j)$ for $j\in [n+1]$ as $$\tilde{A}(j)=\left\{\begin{array}{ll}
		A(j) & \mathrm{if\ }j\neq i,\\
		yA(n+1) & \mathrm{if\ }j=i.
	\end{array}\right.$$
	Then, $(\tilde{u},\tilde{v},\tilde{A}(1),\dots,\tilde{A}(n+1))$ has Boolean behavior of degree $d$. Now note that for every $S\subset[n+1]-\{i\}$, we have that
	\begin{equation}\label{eq:rest1}
		\hat{q}(S)=\hat{p}(S)+y\hat{p}(S\cup\{i\}).
	\end{equation} 
	Also, for every $S\subset [n+1]-\{i\}$ with $|S|\leq d-1$, it is satisfied that 
	\begin{equation}\label{eq:rest2}
		\langle \tilde{u}, \tilde{A}(j^S_1)\dots \tilde{A}(j^S_d)\tilde{v}\rangle=y\langle \tilde{u}, \tilde{A}(j^{S\cup\{i\}}_1)\dots \tilde{A}(j^{S\cup\{i\}}_d)\tilde{v}\rangle.
	\end{equation} 
	Thus, 
	\begin{align*}
		\norm{q}_{\fcb,d}&=\sup_{\substack{(u,v,A(j))\in\mathscr{BB}^d\\ j\in[n+1]-\{i\}}} &&\sum_{S\subset [n+1]-\{i\},|S|\leq d}\hat{q}(S)\langle u, A(j^S_1)\dots A(j^S_{d})v\rangle\\
		&=\sup_{\substack{(u,v,A(j))\in\mathscr{BB}^d\\ j\in[n+1]-\{i\}}}&&\sum_{S\subset [n+1]-\{i\},|S|\leq d}\hat{p}(S)\langle u, A(j^S_1)\dots A(j^S_d)v\rangle \\
		&+&&\sum_{S\subset [n+1]-\{i\},|S|\leq d-1}y\hat{p}(S\cup\{i\})\langle u, A(j^S_1)\dots A(j^S_d)v\rangle \\
		&\leq\sup_{\substack{(\tilde{u},\tilde{v},\tilde{A}(j))\in\mathscr{BB}^d\\ j\in [n+1]}}&&\sum_{S\subset [n+1]-\{i\},|S|\leq d}\hat{p}(S)\langle \tilde{u}, \tilde{A}(j^S_1)\dots \tilde{A}(j^S_d)\tilde{v}\rangle \\
		&+&&\sum_{S\subset [n+1]-\{i\},|S|\leq d-1}\hat{p}(S\cup\{i\})\langle \tilde{u}, \tilde{A}(j^{S\cup\{i\}}_1)\dots \tilde{A}(j^{S\cup\{i\}}_d)\tilde{v}\rangle \\
		&=\sup_{\substack{(\tilde{u},\tilde{v},\tilde{A}(j))\in\mathscr{BB}^d\\ j\in [n+1]}}&&\sum_{S\subset [n],|S|\leq d}\hat{p}(S)\langle \tilde{u}, \tilde{A}(j^S_1)\dots \tilde{A}(j^S_d)\tilde{v}\rangle \\
		&= \norm{p}_{\fcb,d},
	\end{align*}
	where in the second equality we have used \cref{eq:rest1} and in the first inequality \cref{eq:rest2}.
\end{proof}

\section{Quantum query algorithms are Fourier completely bounded polynomials}
Now we are ready to prove  \cref{theo:QQAareFCBPolynomials}, that fully characterizes quantum query algorithms in terms of the Fourier completely bounded $d$-norms.  

\QQAareFCBPolynomials*

To prove \cref{theo:QQAareFCBPolynomials} we just have to reinterpret the semidefinite programs of \cite{gribling2019semidefinite}, based on \cite{QQA=CBF}.  

\begin{theorem}[Gribling-Laurent] \label{theo:sdpcbnorm}
	Let $p:\{-1,1\}^n\to \mathbb{R}$. Then, $p$ is the output of $d$-query quantum algorithm if and only if its degree is at most $2d$ and the value of the following semidefinite program is at most 0,
	\begin{align}
		\max&\ -w+\sum_{x\in \{-1,1\}^n} \frac{p(x)\phi(x)}{2^n}\label{eq:sdpcbnorm}\\
		\mathrm{s.t.}&\ w\geq 0,\ m\in\mathbb{N},\ A_s\in (B_m)^{n+1},\  u,v \in \mathbb{R}^m\nonumber,\ s\in [2d],\\
		&\ \norm{\phi}_1=1,\  \norm{u}^2=\norm{v}^2=w,\nonumber\\
		&\ \hat{\phi}(S_\ind{i})=\langle u,A_1(i_1)\dots A_{2d}(i_{2d})v\rangle,\ \ind{i}\in [n+1]^{2d} \nonumber,
	\end{align}  
	where $\norm{\phi}_1=\sum_{x\in\{-1,1\}^n}\frac{|\phi(x)|}{2^n}$.
\end{theorem}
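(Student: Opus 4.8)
The plan is to deduce \cref{theo:sdpcbnorm} from the operator-algebraic characterization of quantum query algorithms of Arunachalam, Bri\"et and Palazuelos \cite{QQA=CBF} together with semidefinite programming duality, following Gribling and Laurent \cite{gribling2019semidefinite}.

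First I would recall and re-derive the \cite{QQA=CBF} characterization. On a controlled input $x\in\{-1,1\}^n\times\{1\}$ a $d$-query algorithm produces the state $U_dO_x\cdots U_1O_xU_0\ket 0$, where the $U_s$ are fixed unitaries, $O_x=\sum_{i\in[n+1]}x(i)\Pi_i$ for a fixed orthogonal resolution of the identity $(\Pi_i)_{i\in[n+1]}$, and the output is $p(x)=\bra 0U_0^*O_x\cdots O_xU_d^*MU_dO_x\cdots O_xU_0\ket 0$ for a fixed observable $M$ with $\norm M\le 1$. Expanding each $O_x$ over the query register and reversing one of the two resulting blocks of indices gives
\[
p(x)=\sum_{\ind i\in[n+1]^{2d}}x(i_1)\cdots x(i_{2d})\,T_{\ind i},\qquad T_{\ind i}=\langle a,\,C_1(i_1)\cdots C_{2d}(i_{2d})\,b\rangle,
\]
with $a,b$ unit vectors and, for every $s$, the family $(C_s(i))_{i\in[n+1]}$ both a column and a row contraction ($\sum_iC_s(i)^*C_s(i)\preccurlyeq I$ and $\sum_iC_s(i)C_s(i)^*\preccurlyeq I$), since each $\Pi_i$ is a projection, the $U_s$ are unitary, and the contraction $M$ can be absorbed into one middle factor. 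As $\hat p(S)=\sum_{\ind i\,:\,S_{\ind i}=S}T_{\ind i}$, this re-proves the degree bound $\deg p\le 2d$ of \cite{polynomialmethod} and exhibits the $2d$-linear extension of $p$ as a completely bounded multilinear form of completely bounded norm at most $1$. Conversely, \cite{QQA=CBF} shows this is sufficient: given $\deg p\le 2d$ and such a factorization, one dilates the column/row contractions $C_s$ to isometries and coisometries on a larger space, adjoins ancilla registers, and assembles fixed unitaries and an observable realizing $p$ as a $d$-query output. Writing $\norm p_{\cb}$ for the infimum of $\norm u\norm v$ over all such factorizations of the $2d$-linear extension (with arbitrary finite auxiliary dimension), we have $\norm p_{\cb}<\infty$ always, and $p$ is a $d$-query output if and only if $\deg p\le 2d$ and $\norm p_{\cb}\le 1$.

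Next I would write $\norm p_{\cb}$ as a semidefinite program and dualize it. By the Christensen--Sinclair factorization theorem, after normalizing $\norm u^2=\norm v^2=:w$ the admissible factorizations are parametrized by the Gram matrices of the vector families $\{C_s(i_s)\cdots C_{2d}(i_{2d})b\}$ and their left-hand analogues, and the requirements that each $C_s$ be a column and a row contraction, that consecutive factors compose consistently, and that $\norm u^2=\norm v^2=w$ all become finitely many linear and positive-semidefinite constraints on those Gram matrices; a Carath\'eodory bound on their ranks lets the auxiliary dimension be fixed in terms of $n$ and $d$, so this is a genuine finite SDP in the variable $w$, of optimal value $\norm p_{\cb}$. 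Taking its Lagrangian dual and checking strong duality --- the primal is bounded and, after rescaling, strictly feasible, a scaled factorization of the constant polynomial lying in the relative interior, so Slater's condition holds, the optimal values coincide, and the dual is attained --- one obtains a maximization problem in which the multipliers of the linear constraints fixing the Fourier coefficients of $p$ assemble into a function $\phi:\{-1,1\}^n\to\R$; the multipliers of the positive-semidefinite constraints reassemble, by the factorization theorem read backwards, into contractive families $A_s\in(B_m)^{n+1}$ and vectors $u,v$ with $\langle u,A_1(i_1)\cdots A_{2d}(i_{2d})v\rangle=\hat\phi(S_{\ind i})$ for all $\ind i\in[n+1]^{2d}$; the normalization of the primal objective forces $\norm\phi_1=1$ and $\norm u^2=\norm v^2=w$; and the dual objective comes out as $-w+\sum_xp(x)\phi(x)/2^n$. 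Strong duality then gives $\norm p_{\cb}\le 1$ if and only if this dual program has value at most $0$, which combined with the previous paragraph is exactly \cref{theo:sdpcbnorm}.

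The hard part is the third paragraph: verifying that strong duality holds with no gap and the dual attained (where the Carath\'eodory dimension bound and an explicit Slater point for the cb-norm SDP enter), and matching the Lagrangian dual term-by-term to the precise program displayed in \cref{theo:sdpcbnorm} --- in particular explaining why the dual variable organizes itself as an $\ell_1$-normalized function $\phi$ carrying a completely bounded representation of its own Fourier transform, and why the natural normalization of the objective is the additive $-w+\langle p,\phi\rangle$. A secondary technical point, already needed in the middle paragraph, is justifying that the completely bounded norm is computed \emph{exactly} rather than merely relaxed by a finite SDP, which rests on the Christensen--Sinclair factorization theorem and a rank argument for the Gram matrices.
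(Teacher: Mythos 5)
The paper does not prove this theorem; it states it as \cite[Equation (24)]{gribling2019semidefinite}, and \cref{rem:sdpcbnorm} only argues that two cosmetic normalizations are harmless (trading the unitaries used there for contractions via a block-encoding, and collapsing the $2d$ families $A_s$ to a single $A$). So there is no in-paper proof to compare against; what you have written is an outline of the argument in the reference the paper is citing (ABP characterization of $d$-query outputs plus SDP duality), not a new route.

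As an outline it is plausible, but it has a genuine gap exactly where you flag one. The substantive issue is your step ``the multipliers of the positive-semidefinite constraints reassemble, by the factorization theorem read backwards, into contractive families $A_s$''. The program displayed in the theorem is not itself an SDP in the decision variables $(u,v,A_s,\phi,w)$: the constraint $\hat\phi(S_{\ind i})=\langle u,A_1(i_1)\cdots A_{2d}(i_{2d})v\rangle$ is nonlinear (indeed nonconvex) in those variables, and the Lagrangian dual of your cb-norm SDP will naturally land on PSD moment matrices, not on such tuples. Passing from a dual PSD certificate to contractive matrices $A_s$ realizing those inner products is a Gram-to-operator reconstruction of the same kind as the nontrivial direction of the paper's own \cref{prop:fcbasanSDP}, where $A(i)$ is defined by its action on the $v_{\ind j}$ and shown to be a contraction via the Gram inequality \cref{eq:contrinSDP}. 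That reconstruction is the actual content of the equivalence, and a one-phrase gesture at it does not discharge it; as written the claim that the Lagrangian dual is the displayed program is unproven. A smaller point: a degree-$\le 2d$ polynomial on $\{-1,1\}^n$ has many $2d$-linear extensions, and the completely bounded norm characterizing $d$-query outputs is the \emph{infimum} over such extensions of their cb-norms; your phrasing ``the $2d$-linear extension'' hides an optimization that must appear in the primal SDP before you dualize, and it also affects where the constant block $\hat p(\emptyset)$ and the extra $(n+1)$-st coordinate live.
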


\begin{remark}\label{rem:sdpcbnorm}
	\cref{theo:sdpcbnorm} corresponds to \cite[Equation (24)]{gribling2019semidefinite}. There, the authors not only ask for the $A_s(i)$ to be contractions, but also unitaries. However, that extra restriction does not change the value of the semidefinite program because we can always block-encode a contraction in the top left corner of an unitary (see for instance \cite[Lemma 7]{Aaronson2015PolynomialsQQ}). We also want to remark that $A_s(i)$ can be taken to be equal to $A_{s'}(i)$ for every $s,s'\in [2d]$ and every $i\in [n+1]$, as this extra restriction does not change value of the semidefinite program. Indeed, let $(u,v,A_s,w,\phi)$ be part of feasible instance of \cref{eq:sdpcbnorm}. Define now 
	\begin{align*}
		\tilde{u}&:=u\otimes e_1,\\
		\tilde{v}&:=v\otimes e_{2d+1},\\
		A(i)&:=\sum_{s\in [2d]}A_s(i)\otimes e_{s}e_s^T,
	\end{align*}
	where $\{e_s\}_{s\in [2d+1]}$ is an orthonormal basis of $\mathbb{R}^{2d+1}$. Then, $$\langle u,A_1(i_1)\dots A_d(i_{2d})v\rangle=\langle \tilde{u},\tilde{A}(i_1)\dots \tilde{A}(i_{2d})\tilde{v}\rangle,$$
	for every $\ind{i}\in [n+1]^{2d}$. Hence, $(\tilde{u},\tilde{v},\tilde{A},w,\phi)$ is a feasible instance for \cref{eq:sdpcbnorm} that attains the same value as $(u,v,A_s,w,\phi)$. 
\end{remark}

\begin{proof}[ of \ref{theo:QQAareFCBPolynomials}]
	Thanks to \cref{theo:sdpcbnorm} and \cref{rem:sdpcbnorm}, we know that $p$ is the output of $d$-query quantum algorithm if and only if its degree is at most $2d$ and the following constraint is satisfied
	\begin{align}
		&\ \sum_{x\in \{-1,1\}^n} \frac{p(x)\phi(x)}{2^n}\leq w\label{eq:sdpcbnorm2}\\
		\mathrm{s.t.}&\ w\geq 0,\ m\in\mathbb{N},\ A\in (B_m)^{n+1},\ u,v \in \mathbb{R}^m\nonumber,\\
		&\ \norm{\phi}_1=1,\label{eq:phi1norm=1}\\
		&\ \norm{u}^2=\norm{v}^2=w,\nonumber\\
		&\ \hat{\phi}(S_\ind{i})=\langle u,A(i_1)\dots A(i_{2d})v\rangle,\ \ind{i}\in [n+1]^{2d} \nonumber.
	\end{align} 
	Now, note that if $(u,v,A,\phi,w)$ satisfies all conditions of \cref{eq:sdpcbnorm2} except for \cref{eq:phi1norm=1}, then $(u/\sqrt{\norm{\phi}_1},$ $v/\sqrt{\norm{\phi}_1},A,\phi/\norm{\phi}_1,w/\norm{\phi}_1)$ would be a feasible instance. Furthermore, given that $$\sum_{x\in \{-1,1\}^n} \frac{p(x)\phi(x)}{2^n}\leq w \iff \frac{1}{\norm{\phi}_1}\sum_{x\in \{-1,1\}^n} \frac{p(x)\phi(x)}{2^n}\leq \frac{w}{\norm{\phi}_1},$$
	we can write \cref{eq:sdpcbnorm2} forgetting about the normalization condition of \cref{eq:phi1norm=1}. In other words, \cref{eq:sdpcbnorm2} is equivalent to
	\begin{align}
		&\ \sum_{x\in \{-1,1\}^n} \frac{p(x)\phi(x)}{2^n}\leq w\label{eq:sdpcbnorm3}\\
		\mathrm{s.t.}&\ w\geq 0,\ m\in\mathbb{N},\  A\in (B_m)^{n+1},\ u,v \in \mathbb{R}^m\nonumber,\\
		&\ \norm{u}^2=\norm{v}^2=w,\\
		&\ \hat{\phi}(S_\ind{i})=\langle u,A(i_1)\dots A(i_{2d})v\rangle,\ \ind{i}\in [n+1]^{2d} \nonumber.
	\end{align} 
	In addition, by homogeneity we can assume $w=1$. Also note, that if $(u,v,A)$ are part of a feasible instance of \cref{eq:sdpcbnorm3}, then it automatically has Boolean behaviour of degree $2d$, and any $(u,v,A)$ defines a feasible instance for \cref{eq:sdpcbnorm3}. Finally, by Parseval's identity we can rewrite $\sum_{x\in \{-1,1\}^n} \frac{p(x)\phi(x)}{2^n}$ as $\sum_{S\subset [n]}\hat{p}(S)\hat{\phi}(S)$. Putting altogether we get that $p$ is the output of $d$-query quantum algorithm if and only if its degree is at most $2d$ and
	\begin{align*}
			&\ \sum_{S\subset [n],|S|\leq 2d} \hat{p}(S)\langle u,A(i^S_1)\dots A(i^S_{2d})v\rangle\leq 1\\
			\mathrm{s.t.}\
			&(u,v,A)\ \mathrm{has\ Boolean\ behavior\ of\ degree\ }2d,
	\end{align*}  
	which is the same as saying that $\norm{p}_{\fcb,2d}\leq 1$.
\end{proof}

\section{Aaronson and Ambainis conjecture for (Fourier) completely bounded polynomials}
In this section we prove \cref{theo:AAFCBconjectureForMaxDegreePol} and \cref{theo:AAforBCB}. Both are based on the construction used by Varopoulos to disprove a degree 3 von Neumann's inequality \cite{Varopoulos:1974}. First of all, we recall the expressions of variance and influences of a polynomial $p:\{-1,1\}^n\to \mathbb{R}$. The variance is given by 
$$\var[p]=\sum_{|S|\geq 1}\hat{p}^2(S),$$ 
and the influence of the $i$-th variable by $$\mathrm{Inf}_i [p]=\sum_{S\ni i}\hat{p}^2(S).$$ The maximum influence of $p$ is $\text{MaxInf}[p]:=\max_{i\in[n]}\mathrm{Inf}_i[p].$ One may interpret $\var[p]$ as the total change of $p$ and $\mathrm{Inf}_i[p]$ as the change of $p$ that is due to varying the $i$-th variable.  

\subsection{AA conjecture for block-multilinear completely bounded polynomials}\label{subsec:AAforBCB}

Before proving \cref{theo:AAforBCB}, we shall specify what is a block-multilinear completely bounded polynomial.  A \emph{block-multilinear polynomial} of degree $d$ is a polynomial $p:\{-1,1\}^{n\times d}\to \mathbb{R}$ such that if we divide the variables $x\in \{-1,1\}^{n\times d}$ in $d$ blocks of $n$ coordinates each, then the every of the monomials of $p$ has at most one coordinate of each of the blocks. In other words, the block-multilinear polynomials of degree $d$ are those that can be written as 
\begin{equation}
	p(x_1,\dots,x_d)=\hat p(\emptyset)+\sum_{s\in [d]}\sum_{\substack{\ind b\in [d]^s\\b_1<\dots<b_s}}\sum_{\ind{i}\in[n]^s} \hat p(\{(b_1,i_1),\dots,(b_s,i_s)\}) x_{b_1}(i_1)\dots x_{b_s}(i_s),
\end{equation}
for every $(x_1,\dots,x_d)\in (\{-1,1\}^n)^d.$
For this kind of polynomials, there is a very natural way of evaluating them in matrix inputs, 
\begin{equation}
	p(A_1,\dots,A_d)=\hat p(\emptyset)\Id_m+\sum_{s\in [d]}\sum_{\substack{\ind b\in [d]^s\\b_1<\dots<b_s}}\sum_{\ind{i}\in[n]^s}  \hat p(\{(b_1,i_1),\dots,(b_s,i_s)\}) A_{b_1}(i_1)\dots A_{b_s}(i_s),
\end{equation}
for every $A_s\in (M_m)^n$, $s\in [d]$ and $m\in\mathbb N$. The \emph{completely bounded norm} of a block-multilinear polynomial is defined as
\begin{equation}
	\norm{p}_{\cb}:=\sup\{\norm{p(A_1,\dots,A_d)}:\ m\in\mathbb{N},\ A_s\in (B_m)^n,\ s\in [d]\}.
\end{equation}
Concerning these polynomials, we can show the following. 

\AAforBCB*

\begin{remark}
	With our proof of the homogeneous case of \cref{theo:AAforBCB} we can show that for the case of $p:\{-1,1\}^{n\times d}\to \mathbb{R}$ being a homogeneous degree $d$ block-multinear polynomial we have the following non-commutative root influence inequality 
	\begin{equation}\label{eq:rootinfluence}
		\norm{p}_{\cb}\geq \sum_{i\in [n]}\sqrt{\mathrm{Inf}_{s,i}[p]},
	\end{equation}
	for any $s\in [d]$. This improves \cite[Theorem 1.4]{bansal2022influence} in two ways. First, we can allow $s$ to be any number in $[d]$, while they only prove the result of $s\in\{1,d\}$. Second, they prove a weaker statement that depends on $d$, namely,  $$\norm{p}_{\cb}\geq \sum_{i\in [n]}\frac{\sqrt{\mathrm{Inf}_{s,i}[p]}}{\sqrt{e(d+1)}},$$ for $s\in \{1,d\}$.
\end{remark}

\begin{remark}
	Given that $p(x_1,\dots,x_d)=x_1(1)\dots x_d(1)$ is a homogeneous degree $d$ block-multilinear completely bounded polynomial with $\var [p]^2=\mathrm{MaxInf}[p]=1$, we have that the homogeneous case of Theorem \ref{theo:AAforBCB} is optimal.  
\end{remark}

\begin{proof}[ of the homogeneous degree $d$ case of \cref{theo:AAforBCB}]
	Let $p$ be a homogeneous degree $d$ block-multilinear polynomial. Let $s\in [d]$. We label the coordinates by $(r,i)$, where $r\in [d]$ indicates the block, and $i\in [n]$.  Our goal is defining $A\in (B_m)^n$ and $f_\emptyset, e_\emptyset\in S^{m-1}$ such that 
	\begin{equation}\label{eq:desiredcorrelations}
		\langle f_{\emptyset}, A(i_1)\dots A(i_d)e_{\emptyset}\rangle=\frac{\hat{p}(\{(1,i_1),\dots,(d,i_d)\})}{\sqrt{\mathrm{Inf}_{s,i_s}[p]}}.
	\end{equation}
	Once we are there, we can prove the announced root-influence inequality \cref{eq:rootinfluence}. Indeed, 
	\begin{align*}
		\norm{p}_{\cb}&\geq \sum_{i_1,\dots,i_d\in [n]}\hat{p}(\{(1,i_1),\dots,(d,i_d)\})\langle f_{\emptyset}, A(i_1)\dots A(i_d)e_{\emptyset}\rangle\\
		&=\sum_{i_1,\dots,i_d\in [n]}\hat{p}(\{(1,i_1),\dots,(d,i_d)\})\frac{\hat{p}((1,i_1),\dots,(d,i_d))}{\sqrt{\mathrm{Inf}_{s,i_s}[p]}}\\
		&=\sum_{i\in [n]}\sqrt{\mathrm{Inf}_{s,i}[p]}.
	\end{align*}
	Finally, the statement about the maximal influence quickly follows from the root-influence inequality
	\begin{align*}
		\norm{p}_{\cb}\geq \sum_{i\in [n]}\sqrt{\mathrm{Inf}_{s,i}[p]}\geq \sum_{i\in [n]}\frac{\mathrm{Inf}_{s,i}[p]}{\sqrt{\mathrm{MaxInf}[p]}}=\frac{\var [p]}{\sqrt{\mathrm{MaxInf}[p]}},
	\end{align*}
	which after rearranging yields 
	\begin{equation*}
		\mathrm{MaxInf}[p]\geq \left(\frac{\var [p]}{\norm{p}_{\cb}}\right)^2.
	\end{equation*}

	Hence, it suffices to design $(f_\emptyset,e_\emptyset,A)\in S^{m-1}\times S^{m-1}\times (B_m)^n$ satisfying \cref{eq:desiredcorrelations}. Let $\mathcal{S}:=\{\{(r,i_{r}),\dots,(d,i_{d})\}:i_r,\dots,i_{d}\in [n],\ s+1\leq r\leq d\}$ and $\mathcal{S}':=\{\{(1,i_{1}),\dots,(r,i_{r})\}:i_{1},\dots,i_{r}\in [n],\ r\leq s-1\}$. Let $m:=2+|\mathcal{S}|+|\mathcal{S}'|$. Let $\{e_{\emptyset},e_S,f_{\emptyset},f_{S'}:\ S\in\mathcal{S},\ S'\in\mathcal{S}'\}$ be an orthonormal basis of $\mathbb{R}^m$, and define $A(i)\in M_m$ by 
	\begin{align*}
		A(i)e_S&:=e_{S\cup\{(d-|S|,i)\}},\ \mathrm{for}\ 0\leq |S|\leq d-s-1,\ S\in\mathcal{S},\\
		A(i)e_S&:=\sum_{\substack{S'\in\mathcal{S'}\\ |S'|=s-1}}\frac{\hat{p}(S'\cup S\cup \{(s,i)\})}{\sqrt{\mathrm{Inf}_{s,i}(p)}}f_{S'},\ \mathrm{for}\ |S|= d-s,\ S\in\mathcal{S},\\
		A(i)f_{S'}&:=\delta_{(|S'|,i)\in\mathcal{S}'}f_{S'-\{(|S'|,i)\}},\ S'\in\mathcal{S}'.
	\end{align*}

	One can check that $(f_\emptyset,e_\emptyset,A(i))$ satisfies \cref{eq:desiredcorrelations}. This is because the first applications of the $A(i)$'s act like a \emph{creation} operator and the last as \emph{annihilation} operators. The first $d-s$ applications of the $A(i)$'s  create a \emph{superposition} over some vectors with the coefficients that we are interested in (those Fourier coefficients that involve the $d-s+1$ first applications of $A(i)$). The last $s-1$ applications of $A(i)$ act like annihilation operators that lead us to the desired coefficient. 

	Finally, we claim that $A(i)$ are contractions. Given that $\{e_S:\ 0\leq |S|\leq d-s-1,\ S\in\mathcal{S}\}$, $\{e_S:\ |S|= d-s,\ S\in\mathcal{S}\}$ and $\{f_{S'}:\ S'\in\mathcal{S}'\}$ are mapped to orthogonal spaces, we just have to check than when $A(i)$ is a contraction when is restricted to the span of each of these 3 sets. For the first and third sets of vectors that is clear. For the second is true because for any $\lambda\in [n]^{d-s}$
	\begin{align*}
		\norm{A(i)\sum_{\substack{S\in\mathcal{S}\\|S|=d-s}}\lambda_Se_S}&=\norm{\sum_{\substack{S\in\mathcal{S}\\|S|=d-s}}\sum_{\substack{S'\in\mathcal{S}'\\|S'|=s-1}}\frac{\hat{p}(S'\cup S\cup \{(s,i)\})}{\sqrt{\mathrm{Inf}_{s,i}[p]}}\lambda_Sf_{S'}}\\
		&=\sqrt{\frac{\sum_{\substack{S'\in\mathcal{S}'\\|S'|=s-1}}\left(\sum_{\substack{S\in\mathcal{S}\\|S|=d-s}}\hat{p}(S'\cup S\cup \{(s,i)\})\lambda_S\right)^2}{\mathrm{Inf}_{s,i}[p]}}\\
		&\leq\sqrt{\frac{\sum_{\substack{S'\in\mathcal{S}'\\|S'|=s-1}}\left(\sum_{\substack{S\in\mathcal{S}\\|S|=d-s}}\hat{p}(S'\cup S\cup \{(s,i)\})^2\right)\left(\sum_{\substack{S\in\mathcal{S}\\|S|=d-s}}\lambda_S^2\right)}{\mathrm{Inf}_{s,i}[p]}}\\
		&=\sqrt{\frac{\mathrm{Inf}_{s,i}[p]}{\mathrm{Inf}_{s,i}[p]}}\sqrt{\sum_{\substack{S\in\mathcal{S}\\|S|=d-s}}\lambda_S^2}\\
		&=\norm{\sum_{\substack{S\in\mathcal{S}\\|S|=d-s}}\lambda_Se_S},
	\end{align*}
	where in the inequality we have used Cauchy-Schwarz.
\end{proof}

\begin{proof}[ of the general case of \cref{theo:AAforBCB}]
	Let $p:\{-1,1\}^{n\times d}\to \mathbb{R}$ be a block-multilinear degree $d$ polynomial. For every $s\in [d]$, let $p_{=s}$ be its degree $s$ part. Let $D\in [d]$ be such that $\var [p_{=D}]\geq \var [p]/d,$ which exists because $\var [p]=\sum_{s\in [d]}\var [p_{=s}]$. We will now divide the proof in two parts. One is showing that 
	\begin{equation}\label{eq:BCB1}
		\norm{p_{=D}}_{\cb}\leq \norm{p}_{\cb},
	\end{equation} 
	and the other is proving that 
	\begin{equation}\label{eq:BCB2}
		\mathrm{MaxInf}(p_{=D})\geq \left(\frac{\var [p_{=D}]}{\norm{p_{=D}}_{\cb}}\right)^2.
	\end{equation}
	Once we had done that, the result will easily follow:
	\begin{align*}
		\mathrm{MaxInf}(p)&\geq \mathrm{MaxInf}(p_{=D})\geq \left(\frac{\var[ p_{=D}]}{\norm{p_{=D}}_{\cb}}\right)^2\geq \left(\frac{\var[ p]}{d\norm{p}_{\cb}}\right)^2,
	\end{align*}
	where in the second inequality we have used \cref{eq:BCB2}, and in the third we have used \cref{eq:BCB1} and that $\var [p_{=D}]\geq \var [p]/d$. 
	
	First, we prove \cref{eq:BCB1}. Let $B\in B_{d+1}$ be defined by $B:=\sum_{s\in [D]}e_se_{s+1}^T,$ where $\{e_{s}\}_{s\in [D+1]}$ is an orthonormal basis of $\mathbb{R}^{D+1}$. Note that $\langle e_1,B^se_{D+1}\rangle=\delta_{s,D}$ for all $s\in [d]_0$. Hence, 
	\begin{align*}
		\norm{p_{=D}}_{\cb}&=\sup_{\substack{u,v\in S^{m-1},\ A\in (B_m)^n\\ m\in\mathbb N}} \sum_{\substack{\ind b\in [d]^D\\b_1<\dots<b_D}}\sum_{\ind{i}\in[n]^D}&&
		 \hat{p}_{=D}(\{(b_1,i_1),\dots,(b_D,i_D)\}) \langle u,A_{b_1}(i_1)\dots A_{b_D}(i_D) v\rangle\\
		&=\sup_{\substack{u,v\in S^{m-1},\ A\in (B_m)^n\\ m\in\mathbb N}} \sum_{s\in [d]}\sum_{\substack{\ind b\in [d]^s\\b_1<\dots<b_s}}\sum_{\ind{i}\in[n]^s} &&\hat{p}(\{(b_1,i_1),\dots,(b_s,i_s)\})\\
	& &&\cdot \langle u\otimes e_1,[A_{b_1}(i_1)\otimes B]\dots [A_{b_s}(i_s)\otimes B] v\otimes e_{D+1}\rangle\\
	&\leq\norm{p}_{\cb}.&& 
	\end{align*} 
	Second, we prove \cref{eq:BCB2}. Let $\mathcal{S}:=\{\{(b_1,i_1),\dots,(b_{D-1},i_{D-1})\}:\ b_s\in [d],\ b_1<\dots<b_{D-1},\ i_s\in [n],\ s\in [D-1]\}.$ Let $m:=2+|S|$. Let $\{v,f_\emptyset,f_S:S\in\mathcal{S}\}$ be an orthonormal basis of $\mathbb{R}^m$. For $b\in [d],\ i\in [n]$, define $A_b(i)\in M_m$ by 
	\begin{align*}
		A_b(i)v&:=\sum_{\substack{S\in \mathcal{S}\\ |S|=D-1}}\frac{\hat{p}_{=D}(S\cup\{(b,i)\})}{\sqrt{\mathrm{MaxInf}[p_{=D}]}}f_S,\\
		A_b(i)f_{S}&:=\delta_{(b,i)\in S}f_{S-\{(b,i)\}},\ \mathrm{for}\ S\in\mathcal{S}\cup\emptyset.\\
	\end{align*}
	$A_b(i)$ are contractions because they map the vectors of an orthonormal basis to orthogonal vectors without increasing their norms. Note that for $b_1<\dots<b_D$ and $\ind i\in [n]^D$ we have that 
	$$\langle f_\emptyset, A_{b_1}(i_1)\dots A_{b_D}(i_D)v\rangle=\frac{\hat{p}_{=D}(\{(b_1,i_1),\dots,(b_D,i_D)\})}{\sqrt{\mathrm{MaxInf}[p_{=D}]}}.$$ 
	Thus, 
	\begin{align*}
		\norm{p_{=D}}_{\cb}\geq\sum_{\substack{\ind b\in [d]^D\\b_1<\dots<b_D}}\sum_{\ind{i}\in[n]^D}
		\hat{p}_{=D}(\{(b_1,i_1),\dots,(b_D,i_D)\}) \langle f_\emptyset, p(A_1,\dots,A_d) v\rangle=\frac{\var[ p_{=D}]}{\sqrt{\mathrm{MaxInf}[p_{=D}]}},
	\end{align*}
	which after rearranging yields \cref{eq:BCB2}.
\end{proof}

\subsection{AA conjecture for homogeneous Fourier completely bounded polynomials} Finally, we prove a new case of the AA conjecture.

\AAFCBconjectureForMaxDegreePol*
\begin{proof}
	Let $m:=1+{n\choose 0}+\dots+{n\choose d-1}$. Let $\{v,f_{\emptyset},f_{S}: S\subset [n],\ 1\leq |S|\leq d-1\}$ be an orthonormal basis of $\mathbb{R}^m$. Define the matrices $A(i)\in M_m$ as 
	\begin{align*}
		A(i)v&:=\sum_{\substack{S\ni i\\ |S|=d}}\frac{\hat{p}(S)}{\sqrt{\mathrm{MaxInf}[p]}}f_{S-\{i\}},\\
		A(i)f_S&:=\delta_{S\ni i}f_{S-\{i\}},\ \mathrm{for}\ S\subset [n],\ 0\leq |S|\leq  d-1,
	\end{align*}
	for $i\in [n]$ and $A(n+1):=0$. We claim that $(f_{\emptyset},v,A(i))$ has Boolean behavior of degree $d$. $A(n+1)$ is clearly a contraction. For $i\in [n]$, $A(i)$ is a contraction, as it maps vectors of the orthonormal basis to orthogonal vectors without increasing the norm, because $$\norm{A(i)v}^2=\sum_{S\ni i}\frac{\hat{p}(S)^2}{\mathrm{MaxInf}[p]}=\frac{\mathrm{Inf}_i[p]}{\mathrm{MaxInf}[p]}\leq 1.$$ On the other hand, if $S\subset [n]$ satisfies $|S|\leq d-1$, then any $\ind{i}\in [n+1]^d$ with $S_{\ind{i}}=S$ either has a repeated element of $[n]$ or has an appearance of the index $n+1$, which implies that $\langle f_\emptyset, A(i_1)\dots A(i_d) \rangle =0=\hat p (S)$. If $|S|=d$, then any $\ind{i}\in [n+1]^d$ with $S_{\ind{i}}=S$ has $d$ different indices in $[n]$ (corresponding to the elements of $S$), so in that case
	\begin{equation}\label{eq:AAforFCB2}
		\langle f_\emptyset, A(i_1)\dots A(i_d)v \rangle=\frac{\hat{p}(S)}{\sqrt{\mathrm{MaxInf}[p]}}.
	\end{equation}
	Putting everything together we conclude that $(f_\emptyset,v,A(i))$ has Boolean behavior of degree $d$, so 
	\begin{align*}
		\norm{p}_{\fcb,d}&\geq \sum_{S\subset [n]}\hat{p}(S)\langle f_\emptyset, A(i_1)\dots A(i_d)v \rangle=\sum_{ S\subset [n]}\frac{\hat{p}(S)^2}{\sqrt{\mathrm{MaxInf}[p]}}\\
		&=\frac{\var [p]}{\sqrt{\mathrm{MaxInf}[p]}},
	\end{align*}
	where in the first equality we have used \cref{eq:AAforFCB2}.
	After rearranging, the above expression yields $$\mathrm{MaxInf}[p]\geq \left(\frac{\var[ p]
	}{\norm{p}_{\fcb,d}}\right)^2.$$
\end{proof}

\begin{remark}\label{rem:whynot}
	Sadly, we could not extend the proof of \cref{theo:AAFCBconjectureForMaxDegreePol} to the general case. Now, we aim to illustrate what would go wrong with our technique. 
	
	For example, consider a polynomial $p:\{-1,1\}^4\to \mathbb{R}$ with $\deg(p)=2$ and $\norm{p}_{\fcb,4}\leq 1$. Ideally, we would want to define unit vectors $u$ and $v$ and contractions $A(i)$ such that for every $S\subset [4]$ and every $\ind i \in [\ind i^S]$ they satisfied 
	\begin{equation}\label{eq:desiredcorr}
		\langle u, A(i_1)\dots A(i_{4})v \rangle=\frac{\hat{p}(S)}{\sqrt{\mathrm{MaxInf}[p]}}.
	\end{equation}
	If we emulated the strategy of the proof of \cref{theo:AAFCBconjectureForMaxDegreePol}, then $A(1)v$ should be a \emph{normalized} superposition of orthogonal vectors whose amplitudes are all possible $\hat{p}(S_{\ind i})$ that have $i_4=1$. It can be seen that all $\hat{p}(S)$ with $|S|\leq 2$ must be included among these amplitudes. Hence, the \emph{normalizing} factor of $A(1)v$ should be $\sqrt{\var p}$, instead of $\sqrt{\mathrm{MaxInf}(p)}$. Thus, we would reach \begin{equation*}
		\langle u, A(i_1)\dots A(i_{4})v \rangle=\frac{\hat{p}(S)}{\sqrt{\mathrm{Var}[p]}}
	\end{equation*}
	instead of \cref{eq:desiredcorr}, which would lead to $\norm{p}_{\fcb,4}\geq \sqrt{\var p},$ that does not say much about the influences.
\end{remark}
\begin{remark}\label{rem:howtogeneralize}
	However, there might be a different way of, given a polynomial $p$ of degree at most $d$, choosing $(u,v,A)$ with Boolean behavior of degree $d$ such that 
	$$\langle u, A(i_1)\dots A(i_d)v\rangle=\frac{\hat{p}(S_{\ind i})}{\poly (d,\mathrm{MaxInf}[p])},$$
	for any $\ind i\in [n+1]^d$. If that was true, one could copy and paste the proof of \cref{theo:AAFCBconjectureForMaxDegreePol} and conclude \cref{con:AAconjectureFCB}. This reduces \cref{con:AAconjectureFCB} to a question with flavor of tensor networks (at the end of the day we are seeking for an MPS, but caring about the norm of the matrices instead of the dimension) \cite{RevModPhys.93.045003} and almost-quantum correlations (we could also say that we look for $(u,v,A)$ that define a correlation with some symmetry defined by the Boolean behavior, without imposing the symmetries on the $A$) \cite{navascues2015almost}.
\end{remark}
\begin{question}\label{que:generalization}
	Given a polynomial $p$ of degree at most $d$, is there $(u,v,A)\in \mathscr{BB}^d$ such that 	$$\langle u, A(i_1)\dots A(i_d)v\rangle=\frac{\hat{p}(S_{\ind i})}{\poly (d,\mathrm{MaxInf}[p])},$$
	for any $\ind i\in [n+1]^d$?
\end{question}

\textbf{Acknowledgements.} We want to thank Jop Bri\"et, Sander Gribling and Carlos Palazuelos for useful comments, conversations and encouragement. We also want to thank the reviewers for helpful comments.

\bibliographystyle{alphaabbrv}
\bibliography{Bibliography}
\end{document}